\newtheorem{corollary}{\normalfont \textbf{Corollary}} 
\newtheorem{assumption}{\normalfont \textbf{Assumption}}
\newtheorem{lemma}{\normalfont \textbf{Lemma}}
\newtheorem{theorem}{\normalfont \textbf{Theorem}}
\newtheorem{proposition}{\normalfont \textbf{Proposition}}
\newtheorem{definition}{\normalfont \textbf{Definition}}
\newtheorem{remark}{\normalfont \textbf{Remark}}
\title{\textbf{Global and Distributed Reproduction Numbers of a Multilayer SIR Model with an Infrastructure Network}}
\author{Jos\'e I. Caiza, Junjie Qin, and Philip E. Par\'e \thanks{*Jos\'e I. Caiza, Junjie Qin, and Philip E.~Par\'e are with the Elmore Family School of Electrical and Computer Engineering at Purdue University. Emails:~jcaiza@purdue.edu, jq@purdue.edu, philpare@purdue.edu. This work was supported in part 
   by the National Science Foundation, grants
   NSF-ECCS \#2032258 and \#2238388.}}
\begin{document}

\maketitle

\begin{abstract}
   In this paper, we propose an SIR spread model in a population network coupled with an infrastructure network that has a pathogen spreading in it. 
   We develop a threshold condition to characterize the monotonicity and peak time of a weighted average of the infection states in terms of the global (network-wide) effective reproduction number. We further define the distributed reproduction numbers (DRNs) of each node in the multilayer network which are used to provide local threshold conditions for the dynamical behavior of each entity. Furthermore, we leverage the DRNs to predict the global behavior based on the node-level assumptions. We use both analytical and simulation results to illustrate that 
   the DRNs allow a more accurate analysis of the networked spreading process than the global effective reproduction number.


\end{abstract}

\section{Introduction} 

The spread of contagious diseases can be catastrophic, causing worldwide impact in a broad variety of aspects, from human losses to financial crises. The Covid-19 pandemic revealed the importance of understanding the behavior of disease-spreading to predict future outbreaks, and as a result, possibly mitigation algorithms. 
Developing these tools is pivotal to assist public health officials and politicians in their decision-making processes during epidemic outbreaks.

Networked SIR models have been studied intensively in the controls community in recent years \cite{hethcote,PARE2020345,MEI2017116}. However, a vast majority of such models only account for person-to-person interaction as the propagation mechanism. Nevertheless, diseases can spread through other means, such as water distribution systems~\cite{RANDAZZO2020115942,VERMEULEN2015109} or contaminated surfaces (e.g., public transportation services~\cite{brooks}, and hospitals~\cite{CICCOLINI2013380}). Consequently, new models have been recently proposed where a water compartment is coupled with traditional epidemic models. In~\cite{Tien2010-ch}, a Susceptible-Infected-Water-Removed (SIWR) compartmental model was proposed by coupling the classical SIR person-to-person infection with a contaminated water compartment. In~\cite{SHUAI2011118}, a cholera model was proposed and the virus propagates through both direct and indirect transmission pathways with a water compartment. In~\cite{liu}, the model proposed in~\cite{Tien2010-ch} is extended by coupling a human-contact SIS network with a single water compartment that may be contaminated. Furthermore, a generalization of the model proposed in~\cite{liu} is given in~\cite{pare_2} where a multilayer SIS model is coupled with a generic infrastructure network
. In addition to layered spread networks, there have also been multilayer models that explore the coupling between networked virus spreading and opinion dynamics~\cite{baike_peak_infection,baike_antagonistic}. To the best of our knowledge, this is the first multilayer networked model comprised of a \emph{networked SIR model and an infrastructure network}.

Reproduction numbers have been widely used to inform the public about the severity of a virus, to predict epidemic outbreaks, and to provide policymakers with information to help design mitigation strategies. More precisely, there are two reproduction numbers constantly analyzed in the literature of mathematical epidemiology. First, the \textit{basic reproduction number} of a group of individuals quantifies the expected number of infected individuals assuming the whole population is always susceptible. On the other hand, the \textit{effective reproduction number} quantifies the expected number of infected individuals considering the evolution of the susceptible proportion of the population~\cite{VANDENDRIESSCHE2017288}. The information used to compute both reproduction numbers is leveraged to provide threshold conditions around one that predict the transient and steady-state behaviors of the spreading process~\cite{PARE2020345,MEI2017116,cao,pappas}. The concept of reproduction numbers has been extended to novel networked models, e.g., networked bi-virus models~\cite{liu_bivirus} and multilayer SIS networked spreading process~\cite{pare_2,GRACY202319}.

Nonetheless, a global (network-level) analysis does not provide an accurate understanding of the local (node-level) spreading behavior, given the heterogeneity in both spreading parameters and the network structure. In~\cite{baike}, the novel concept of distributed reproduction numbers (DRNs) is introduced, which quantifies the expected number of infections resulting from the pair-wise interaction between nodes for standard networked SIS/SIR models. Further, the DRNs are used to define the reproduction number of each community in the network. In this work, we develop a DRNs framework for the novel multilayer networked SIR model \textit{with an infrastructure network} and provide threshold conditions to predict local and global (network-wide) behavior based on node-level assumptions.


In summary, the contributions of our work are:
\begin{itemize}
    \item We derive and analyze the transient behavior of a novel multilayer SIR model with an infrastructure network based on the global effective reproduction number.
    \item We define the DRNs for each node in the multilayer network, whether it is a node in the population or infrastructure network, and provide sufficient conditions to predict the monotonicity of the epidemic spreading at the node level.
    \item We leverage the DRNs to analyze the transient behavior of the overall networked spreading process.
\end{itemize}

The rest of the paper is outlined as follows. The multilayer networked SIR model with infrastructure network is developed in Section~\ref{sec:model_problem}, and the research problems of interest are stated. The analysis of equilibria and the transient behavior based on the global effective reproduction number  is given in Section~\ref{sec:network_level_analysis}. In Section~\ref{sec:DRNs}, the DRNs for any node in the multilayer network are defined and sufficient conditions are provided to predict the local spreading behavior (i.e., at the node level). Simulations illustrating our theoretical findings are provided in Section~\ref{sec:simulations}. Finally, Section~\ref{sec:conclusions} concludes this work and states future directions.

\textit{Notation}: 
The $i$th entry of a vector $x$ is denoted $x_i$. We use $\mathbf{0}$ and $\mathbf{1}$ to denote the vectors, whose entries are equal to $0$ and $1$, respectively, and use $I_n$ to denote an $n\times n$ identity matrix. For any vector $x\in\mathbb{R}^n$, we use $\text{diag}(x)$ to denote the $n\times n$ diagonal matrix whose $i$th diagonal entry equals $x_i$. For a square matrix $M$, we use $\text{diag}(M)$ to zero out the off-diagonal elements of $M$ and $[M]_j$ to denote the element in the $j$th diagonal entry. For any two real vectors $a,b\in\mathbb{R}^n$, we write $a>b$ if $a_i\geq b_i$ and $a\neq b$, and $a\gg b$ if $a_i>b_i$ for all $i\in\{1,\dots,n\}$. For a square matrix $M$, we use $\sigma(M)$ to denote the spectrum of $M$, $\rho(M)$ to denote the spectral radius of $M$, 
and $\lambda_{\max}(M) = \text{arg}\max\{\text{Re}(\lambda):\lambda \in \sigma(M)\}$.

\section{Model and Problem Formulation}~\label{sec:model_problem}

In this section, we develop a continuous-time standard networked SIR model coupled with an infrastructure network. This model will hereafter be referred to as the \textit{multilayer networked model}. We also formulate the problem we aim to analyze in this work. 

Consider a pathogen spreading over a two-layer network, namely the population or human-contact network $\mathcal{P}$ and the infrastructure network $\mathcal{I}$. 
The set of $n$ group of individuals in the population network is denoted by $\mathcal{V^P}$, i.e., $|\mathcal{V^P}|=n$, and the set of $m$ infrastructure resources is denoted by $\mathcal{V^I}$, i.e., $|\mathcal{V^I}|=m$. The set of all the nodes in the multilayer network is denoted by $\mathcal{V}$, where, $\mathcal{V}=\mathcal{V^P}\cup\mathcal{V^I}$, and $|\mathcal{V}|=|\mathcal{V^P}|+|\mathcal{V^I}|$. We allow any node to be contaminated as a consequence of its interactions with other infected groups of individuals and/or as a consequence of its interactions with the infrastructure resources. 

We denote by $s_i(t),~x_i(t),~r_i(t)$, the proportion of susceptible, infected, and recovered individuals, respectively, in a group $i$ at time $t\geq 0$. We assume that the total number of individuals in each group $i$ remains constant, i.e., $s_i(t)+x_i(t)+r_i(t)=1$. Each group of individuals $i\in\mathcal{V^P}$ has a healing rate $\gamma_i$, a person-to-person infection rate $\beta_{ij}$ for all $j\in\mathcal{V^P}$, and a person-to-resource infection rate $\beta_{ij}^w$ for all $j\in\mathcal{V^I}$. The evolution of the proportion of infected, susceptible, and recovered individuals in each group $i \in \mathcal{V^P}$ can be described as follows:
\begin{align}\label{eq:model}
    \Dot{s}_i(t) &= -s_i(t)\Bigg( \sum_{j=1}^n \beta_{ij}x_j(t) + \sum_{j=1}^m \beta_{ij}^w w_j(t)\Bigg), \nonumber \\
    \Dot{x}_i(t) &= s_i(t)\Bigg( \sum_{j=1}^n \beta_{ij}x_j(t) + \sum_{j=1}^m \beta_{ij}^w w_j(t)\Bigg) - \gamma_i x_i(t), \nonumber \\
    \Dot{r}_i(t) &= \gamma_i x_i(t), 
\end{align}
where $w_j(t)$ denotes the virus concentration in resource node $j \in \mathcal{V^I}$. The contamination of resource node $w_j(t)$ evolves as

\vspace{-5ex}

\footnotesize
\begin{align}
    \Dot{w}_j(t) &= -\gamma_j^w w_j(t) + \sum_{k=1}^m \alpha_{kj} w_k(t) - w_j(t) \sum_{k=1}^m \alpha_{jk} + \sum_{k=1}^n c_{kj}^w x_k(t), \label{eq:w_node}
\end{align}

\vspace{-2ex}

\normalsize

\noindent
where $\gamma_j^w$ denotes the decay rate of the contamination of resource node $j\in\mathcal{V^I}$, $\alpha_{kj}$ denotes the flow of the pathogen from any resource $k\in\mathcal{V^I}$, and $c_{kj}^w$ denotes the person-to-resource infection rate for all $k\in\mathcal{V^P}$.

Note that the third equation of \eqref{eq:model} is redundant given the constraint $s_i(t)+x_i(t)+r_i(t)=1$. Therefore, the model \eqref{eq:model} and \eqref{eq:w_node} in vector form becomes
\begin{subequations}\label{eq:vector_model}
    \begin{align}
        \dot{s}(t) &= -\text{diag}\big(s(t)\big)\big(Bx(t) + B_ww(t)\big), \label{eq:s_vector}\\
        \dot{x}(t) &= \text{diag}\big(s(t)\big)\big(Bx(t) + B_ww(t)\big) - Dx(t), \label{eq:x_vector} \\
        \dot{w}(t) &= -D_ww(t) + A_ww(t) + C_wx(t), \label{eq:w_vector}
    \end{align}
\end{subequations}
where $B = [\beta_{ij}]_{n\times n}$,~$B_w = [\beta_{ij}^w]_{n\times m}$, $D$ and $D_w$ are diagonal matrices with the healing rates $\gamma_i$ and $\gamma_i^w$, respectively,~$A_w$ has negative diagonal entries equal to $\alpha_{jj}-\sum_k\alpha_{kj}$ and off-diagonal entries equal to $\alpha_{kj}$, and $C_w=[c_{jk}^w]_{m\times n}$. Therefore, the columns of $A_w$ sum to zero, i.e., $A_w^\top\mathbf{1}=0$.

System \eqref{eq:vector_model} can be written more compactly using:
\begin{align}\label{eq:matrices}
    z(t)&\coloneqq \begin{bmatrix}
        x(t) \\ w(t)
    \end{bmatrix},~~~H\big(s(t)\big)\coloneqq \begin{bmatrix}
        \text{diag}\big(s(t)\big) & 0 \\ 0 & I_m
    \end{bmatrix}, \nonumber\\
    B_f &\coloneqq \begin{bmatrix}
        B & B_w \\ C_w & A_w - \text{diag}(A_w)
    \end{bmatrix}, \nonumber\\
    D_f &\coloneqq \begin{bmatrix}
        D & 0 \\ 0 & D_w - \text{diag}(A_w)
    \end{bmatrix}.
\end{align}

Therefore, \eqref{eq:vector_model} can be written as 
\begin{align}\label{eq:z}
    \dot{z}(t) &= \big(H(s(t))B_f - D_f\big)z(t).
\end{align}

We impose the following assumptions on the system parameters.

\begin{assumption}\label{assump:parameters}
    For all $i,j \in \mathcal{V^P},~\gamma_i>0,~\beta_{ij}\geq0$. For all $i\in\mathcal{V^P},~j\in\mathcal{V^I},~\gamma_j^w- \alpha_{jj} +\sum_{k} \alpha_{kj} > 0,~\beta_{ij}^w\geq 0,~c_{ij}^w\geq0$, with at least one $i$ such that $c_{ij}^w>0$,  and one $j$ such that $\beta_{ij}^w>0$. Moreover, $B$ and $A_w$ are irreducible.
\end{assumption}

The assumption imposed on $\beta_{ij}^w$ and $c_{ij}^w$ ensures the coupling between the population network and the resource nodes, and therefore $B_f$ is irreducible. Moreover, Assumption~\ref{assump:parameters} ensures matrix $D_f$ is invertible since both the healing rates for the population nodes $\gamma_i$, and the contamination decay of the resource nodes $\gamma_j^w- \alpha_{jj} +\sum_{k} \alpha_{kj}$ is positive.
We first show that the system is well-defined. That is, since $s_i(t),~x_i(t),~r_i(t)$ represent proportions of a given population $i\in\mathcal{V^P}$, they must not exceed one or go negative.  Moreover, the concentration of the virus $w_j(t)$ at each resource node $j\in\mathcal{V^I}$ must never be negative. Otherwise, these states lack physical meaning.
\begin{lemma}\label{lem:possitivity}
    Suppose Assumption~\ref{assump:parameters} holds, $s_i(0),~x_i(0),$ $s_i(0)+x_i(0) \in [0,1]$ for all $i\in\mathcal{V^P}$ and $w_j(0)\geq0$ for all $j\in\mathcal{V^I}$. Then, $x_i(t) \in [0,1]$ for all $i\in[0,1]$ and $w_j(t) \geq 0$ for all $j\in\mathcal{V^I}$, for all $t\geq0$.
\end{lemma}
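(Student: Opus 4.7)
The plan is to establish the claim in two stages: first non-negativity of all states, then the upper bound on $x_i(t)$. The key structural observation is that the coupled variables $z = (x^\top, w^\top)^\top$ evolve under a \emph{Metzler} time-varying matrix, so their non-negative orthant is forward invariant.

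First I would handle $s_i(t)$. Since \eqref{eq:s_vector} reads $\dot s_i(t) = -s_i(t)\,g_i(t)$ with $g_i(t) = \sum_j \beta_{ij} x_j(t) + \sum_j \beta_{ij}^w w_j(t)$, integration yields $s_i(t) = s_i(0)\exp\!\bigl(-\int_0^t g_i(\tau)\,d\tau\bigr)$, regardless of the sign of $g_i$. Hence $s_i(0)\in[0,1]$ implies $s_i(t)\ge 0$ for all $t\ge 0$; moreover $s_i(t)\le s_i(0)\le 1$ follows from the same formula once we show $g_i\ge 0$, which comes from the next step.

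Second, with $s_i(t)\ge 0$ established, consider $\dot z(t) = M(t)z(t)$ where $M(t) := H(s(t))B_f - D_f$ from \eqref{eq:z}. By Assumption~\ref{assump:parameters}, the blocks $B$, $B_w$, $C_w$ are entrywise non-negative, and $A_w - \text{diag}(A_w)$ collects the off-diagonal entries $\alpha_{kj}\ge 0$ of $A_w$; thus $B_f\ge 0$ entrywise. Since $H(s(t))$ is diagonal with entries in $[0,1]\cup\{1\}$ (non-negative), $H(s(t))B_f$ is entrywise non-negative, and subtracting the diagonal matrix $D_f$ only affects diagonal entries. Therefore $M(t)$ is Metzler for every $t\ge 0$, and the standard positivity result for linear time-varying systems with Metzler generators implies $z(0)\ge 0 \Rightarrow z(t)\ge 0$ for all $t\ge 0$. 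This gives both $x_i(t)\ge 0$ and $w_j(t)\ge 0$, and in particular $g_i(t)\ge 0$, closing the loop on $s_i(t)\in[0,1]$.

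Third, for the upper bound $x_i(t)\le 1$, I would reintroduce $r_i(t) := 1 - s_i(t) - x_i(t)$, whose dynamics $\dot r_i = \gamma_i x_i$ is non-negative by the previous step. Since $r_i(0) = 1 - s_i(0) - x_i(0) \ge 0$ by the hypothesis $s_i(0)+x_i(0)\in[0,1]$, we get $r_i(t)\ge 0$ for all $t\ge 0$, i.e., $x_i(t)\le 1 - s_i(t) \le 1$.

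The only genuinely non-routine step is the Metzler/positivity argument for the coupled $z$-system; the bulk of the work is identifying that $B_f \ge 0$ entrywise (in particular that stripping the diagonal of $A_w$ leaves a non-negative matrix) so that the time-varying generator is Metzler, after which invariance of $\mathbb{R}_{\ge 0}^{n+m}$ is classical. The remaining pieces — the explicit exponential representation for $s_i$ and the monotonicity of $r_i$ — are straightforward once non-negativity of $(x,w)$ is in hand.
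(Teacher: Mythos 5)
Your proof is correct, and it takes a genuinely different route from the paper. The paper argues by checking the sign of the vector field on the boundary of the feasible set: it verifies $\dot w_j(\tau)\ge 0$ when $w_j(\tau)=0$, and examines the extreme configurations $(s_i,x_i,r_i)=(0,0,1)$, $(1,0,0)$, and $(0,1,0)$ to conclude that trajectories cannot exit, then invokes the hypothesis at $\tau=0$. You instead decompose the problem into three self-contained pieces: the integrating-factor representation $s_i(t)=s_i(0)\exp\bigl(-\int_0^t g_i(\tau)\,d\tau\bigr)$, which gives $s_i(t)\ge 0$ unconditionally and breaks any circularity; forward invariance of the non-negative orthant for $\dot z=M(t)z$ with $M(t)=H(s(t))B_f-D_f$ Metzler, which delivers $x\ge 0$ and $w\ge 0$ simultaneously; and monotonicity of $r_i$ for the upper bound $x_i\le 1$. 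Your version is arguably tighter: the paper's boundary check only treats a few corner configurations (e.g., it does not explicitly handle $x_i(\tau)=0$ with $s_i(\tau)\in(0,1)$, or the face $s_i+x_i=1$ with both coordinates positive), whereas the Metzler positivity result covers the entire face $z_i=0$ at once, and the exponential formula for $s_i$ removes the need to reason about $s_i$ on the boundary at all. What the paper's approach buys in exchange is an elementary, epidemiologically interpretable narrative (full recovery is an equilibrium; a fully infected group must start recovering) with no appeal to LTV positivity theory. Two small points to tidy in your write-up: when you invoke the Metzler argument you only need $s_i(t)\ge 0$ (which you already have), so the parenthetical claim that the entries of $H(s(t))$ lie in $[0,1]$ is not yet available at that stage and should be dropped; and, like the paper, you implicitly assume global existence of solutions, which follows from local Lipschitz continuity together with the a priori bounds your argument produces on the maximal interval of existence.
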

\begin{proof}
    Suppose that at some time $\tau$, $s_i(\tau)$, $x_i(\tau)$, $s_i(\tau)+x_i(\tau)\in[0,1]$ for all $i\in\mathcal{V^P}$ and $w_j(\tau)\geq 0$ for all $j\in\mathcal{V^I}$. First consider any resource node $j\in\mathcal{V^I}$. If $w_j(\tau)=0$, then from Assumption~\eqref{assump:parameters} and \eqref{eq:w_node}, $\dot{w}_j(\tau)\geq0$. Therefore, $w_j(t)\geq0$ for all $t\geq\tau$.
    
    Now consider any node $i\in\mathcal{V^P}$. If at time $\tau$ node $i$ is fully recovered, i.e., $r_i(\tau)=1,~s_i(\tau)=x_i(\tau)=0$, then from Assumption~\ref{assump:parameters} and \eqref{eq:model}, $\dot{s}_i(\tau),\dot{x}_i(\tau),\dot{r}_i(\tau)=0$. In other words, when all individuals in group $i$ have recovered at time $\tau$, the system reaches an equilibrium. Thus, $x_i(t)=0$ for all $t\geq \tau$. If at time $\tau$ the whole population at node $i$ is susceptible, i.e., $s_i(\tau)=1,~x_i(\tau)=r_i(\tau)=0$, then from Assumption~\ref{assump:parameters} and \eqref{eq:model}, $\Dot{s}_i(\tau)\leq 0,~\dot{x}_i(\tau)\geq 0$. Therefore, $x_i(t)\geq 0$ for all $t\geq \tau$. In this case, when all the population is susceptible at time $\tau$, the susceptible proportion will decrease as the infected will start increasing. Finally, consider the case when the whole population is infected, i.e., $x_i(\tau)=1,~s_i(\tau)=r_i(\tau)=0$, then from Assumption~\ref{assump:parameters} and \eqref{eq:model}, $\dot{x}_i(\tau)<0,~\dot{r}_i(\tau)>0$. Therefore, $x_i(t)\leq1$ for all $t\geq\tau$. That is, if the whole population at node $i$ becomes infected, it will start to recover and the infected proportion will decrease. 

    Since by assumption $s_i(0),~x_i(0),~s_i(0)+x_i(0)\in[0,1]$ for all $i\in\mathcal{V^P}$ and $w_j(0)\geq0$ for all $j\in\mathcal{V^I}$, the lemma follows by setting $\tau=0$. 
    \hfill 
\end{proof}

Given that we are proposing a new model and we look towards analyzing the local (node-level) dynamical behavior, in the sections that follow, we aim to provide insights related to the following problems:
\begin{enumerate}
    \item Define the global (network-wide) effective reproduction number to characterize the dynamical behavior of \eqref{eq:z}. 
    \item Define node-level effective reproduction numbers based on the DRNs to predict the transient behavior of any node $i\in\mathcal{V}$ according to threshold conditions;
    \item Provide sufficient conditions to predict the network-level behavior based on assumptions imposed on the node-level effective reproduction numbers.
\end{enumerate}

\section{Network-level Analysis}\label{sec:network_level_analysis}

This section examines the equilibrium of the multilayer model in \eqref{eq:z} and motivates the use of the global effective reproduction number to analyze the networked spreading behavior.

\subsection{Equilibria}

First, we present a result related to the monotonicity of the susceptible proportion of each population $i\in\mathcal{V^P}$.

\begin{lemma}\label{lem:s_decrease}
    If $s_i(0),x_i(0) \in [0,1]^n$, for all $i\in\mathcal{V^P}$, and $w_j(0)\geq 0$, for all $j\in\mathcal{V^I}$, the susceptible state $s_i(t)$ is monotonically decreasing, for all $i\in\mathcal{V^P}$, $t\geq0$. 
\end{lemma}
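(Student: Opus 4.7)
The plan is to read off the sign of $\dot{s}_i(t)$ directly from the first equation of \eqref{eq:model} by combining the non-negativity of the state variables (from Lemma~\ref{lem:possitivity}) with the sign constraints on the parameters (from Assumption~\ref{assump:parameters}). Specifically, $\dot{s}_i(t) = -s_i(t)\bigl(\sum_j \beta_{ij}x_j(t) + \sum_j \beta_{ij}^w w_j(t)\bigr)$, which is a product of $-s_i(t)$ and a sum of non-negative quantities, so provided $s_i(t)\geq 0$ the right-hand side is non-positive.

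First I would argue that $s_i(t) \geq 0$ for all $t\geq 0$. This follows because $\dot{s}_i(t) = -s_i(t)\,g_i(t)$ with $g_i(t) := \sum_j \beta_{ij}x_j(t) + \sum_j \beta_{ij}^w w_j(t)$ integrable and locally bounded on any compact time interval; integrating the linear (in $s_i$) ODE gives $s_i(t) = s_i(0)\exp\bigl(-\int_0^t g_i(\tau)\,d\tau\bigr)$, which is non-negative for any $s_i(0) \in [0,1]$. Alternatively, one can note that whenever $s_i(\tau)=0$ at some time $\tau$, the derivative satisfies $\dot{s}_i(\tau)=0$, so $s_i$ cannot cross zero from above, mirroring the argument used in Lemma~\ref{lem:possitivity}.

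Second, I would invoke Lemma~\ref{lem:possitivity} to guarantee $x_j(t) \geq 0$ for all $j\in\mathcal{V^P}$ and $w_j(t)\geq 0$ for all $j \in \mathcal{V^I}$, and Assumption~\ref{assump:parameters} to ensure $\beta_{ij}\geq 0$ and $\beta_{ij}^w\geq 0$. Consequently $g_i(t)\geq 0$, and combining this with $s_i(t)\geq 0$ gives $\dot{s}_i(t) = -s_i(t)\,g_i(t) \leq 0$ for all $t \geq 0$, which is precisely the claimed monotonicity.

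There is no real obstacle here: the lemma is essentially a direct consequence of the structure of the $\dot{s}_i$ equation together with the positivity results already established. The only subtlety, and the one requiring mild care, is to make the positivity of $s_i(t)$ itself rigorous, since Lemma~\ref{lem:possitivity} as stated focuses on $x_i$ and $w_j$; I would therefore include the short exponential-integrating-factor argument above so that the chain of implications $s_i(t)\geq 0$, $g_i(t)\geq 0$, $\dot{s}_i(t)\leq 0$ is airtight.
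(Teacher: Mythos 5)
Your proposal is correct and follows essentially the same route as the paper: the paper's proof simply observes that $\operatorname{diag}(s(t))\bigl(Bx(t)+B_w w(t)\bigr)$ is non-negative and hence $\dot{s}(t)\leq 0$ by \eqref{eq:s_vector}. Your version is in fact slightly more careful, since you explicitly justify $s_i(t)\geq 0$ via the integrating-factor representation (a point the paper leaves implicit, as Lemma~\ref{lem:possitivity} is stated only for $x_i$ and $w_j$), and you explicitly cite the non-negativity of $x_j(t)$ and $w_j(t)$ needed to make the sign argument airtight.
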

\begin{proof}
    From Assumption~\ref{assump:parameters}, we have that $\text{diag}\big(s(t)\big)\big(Bx(t) + B_ww(t)\big)$ is non-negative. Thus, by 
    \eqref{eq:s_vector}, $\dot{s}(t)$ is always non-positive. Therefore, $s(t)$ is monotonically decreasing.
    \hfill 
\end{proof}

The main takeaway from Lemma~\ref{lem:s_decrease} is that the coupling of the population with the infrastructure network does not change the SIR-like property of the susceptible proportions, i.e., $s_i(t)$ decreases with time. It is well known that the standard networked SIR model, i.e., $B_w = 0$ in \eqref{eq:s_vector} and \eqref{eq:x_vector}, has an infinite number of healthy equilibria $(s^*,\mathbf{0},\mathbf{0})$, where $r^*=\mathbf{1}-s^*$. Also, the system will never reach an endemic equilibrium. Thus, we analyze if the networked SIR model has the same steady-state behavior when the population network is coupled with an infrastructure network as in \eqref{eq:vector_model}.

\begin{proposition} \label{prop:equilibria}
    Consider the networked model in \eqref{eq:vector_model}. Let Assumption 1 hold and assume $\gamma^w_j>0$ for all $ j \in\mathcal{V^I}$. The set of equilibria has the form $(s^*,\mathbf{0},\mathbf{0})$, where $s^*\in[0,1]^n$. 
\end{proposition}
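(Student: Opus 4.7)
The plan is to directly analyze the equilibrium conditions of \eqref{eq:vector_model} by setting $\dot{s}=\dot{x}=\dot{w}=\mathbf{0}$. The critical observation is that the nonlinear infection term $\text{diag}(s(t))(Bx(t)+B_w w(t))$ appears with opposite signs in \eqref{eq:s_vector} and \eqref{eq:x_vector}, so adding these two vector equations cancels the coupling and isolates the linear healing term. This reduces the problem to two linear algebraic questions: invertibility of $D$ and invertibility of $D_w - A_w$.

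First I would add \eqref{eq:s_vector} and \eqref{eq:x_vector} evaluated at an equilibrium to obtain $Dx^{\star}=\mathbf{0}$. By Assumption~\ref{assump:parameters}, $\gamma_i>0$ for every $i\in\mathcal{V^P}$, so $D$ is invertible and $x^{\star}=\mathbf{0}$. Substituting this into the equilibrium form of \eqref{eq:w_vector} yields $(A_w-D_w)w^{\star}=\mathbf{0}$, so the remaining step is to verify that $A_w-D_w$ is nonsingular.

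The main obstacle is this invertibility claim, which is where the hypothesis $\gamma_j^w>0$ enters. I would argue as follows. By construction the off-diagonal entries of $A_w$ are nonnegative, so $A_w-D_w$ is a Metzler matrix. Using the identity $A_w^{\top}\mathbf{1}=\mathbf{0}$ given in the model description, one has
\begin{equation*}
    (A_w-D_w)^{\top}\mathbf{1} \;=\; -D_w\mathbf{1} \;\ll\; \mathbf{0},
\end{equation*}
since every $\gamma_j^w>0$. The existence of a strictly positive vector $v=\mathbf{1}$ with $v^{\top}(A_w-D_w)\ll\mathbf{0}$ is a standard Metzler/M-matrix certificate that $A_w-D_w$ is Hurwitz; in particular, $0\notin\sigma(A_w-D_w)$, so $A_w-D_w$ is invertible. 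Hence $w^{\star}=\mathbf{0}$.

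Finally, with $x^{\star}=\mathbf{0}$ and $w^{\star}=\mathbf{0}$, the right-hand side of \eqref{eq:s_vector} vanishes identically, so $\dot{s}=\mathbf{0}$ holds automatically for any $s^{\star}\in\mathbb{R}^n$. Invoking Lemma~\ref{lem:possitivity} (which confines the population states to $[0,1]$) and Lemma~\ref{lem:s_decrease} (which shows $s(t)$ is non-increasing), every admissible equilibrium takes the form $(s^{\star},\mathbf{0},\mathbf{0})$ with $s^{\star}\in[0,1]^n$, and conversely every such triple is indeed an equilibrium. This yields the stated characterization of the equilibrium set.
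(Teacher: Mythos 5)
Your proof is correct and follows essentially the same route as the paper's: adding the $\dot{s}$ and $\dot{x}$ equilibrium equations to cancel the nonlinear infection term and force $x^*=\mathbf{0}$ (since $D$ is positive diagonal), then reducing the remaining question to the nonsingularity of $A_w-D_w$. The only difference is in that last step, where the paper applies the Gershgorin disc theorem to the columns of $A_w-D_w$ while you use the equivalent Metzler/M-matrix certificate $\mathbf{1}^\top(A_w-D_w)=-\mathbf{1}^\top D_w\ll\mathbf{0}$; both arguments rest on the same two facts, namely $A_w^\top\mathbf{1}=\mathbf{0}$ and $\gamma_j^w>0$, and both are valid.
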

\begin{proof}
    The point $(s^*,x^*,w^*)$ is an equilibrium of \eqref{eq:vector_model} if and only if:
    \begin{align*}
        \mathbf{0} &= -\text{diag}(s^*)\big(Bx^* + B_ww^*\big) \\
        \mathbf{0} &= \text{diag}(s^*)\big(Bx^* + B_ww^*\big) - Dx^* \\
        \mathbf{0} &= -D_ww^* + A_ww^* + C_wx^*. 
    \end{align*}
    Therefore, each point of the form $(s^*,\mathbf{0},\mathbf{0})$, i.e., at a healthy state, is an equilibrium. 
    
    Conversely, adding the first two equations yields $\mathbf{0} = Dx^*$. Given that $D$ is a positive diagonal matrix, then $x^*$ must be $\mathbf{0}$. Therefore, from the last equation, we have
    \begin{equation}
    \label{eq:Aw-Dw_0}
        -D_ww^* + A_ww^* = (A_w-D_w)w^* = \mathbf{0}.
    \end{equation}
    Note that by the structure of $A_w$, 
    the column sums of $A_w$ are zero. Therefore, by Assumption 1 and since we assume $\gamma^w_j>0$ $\forall j \in\mathcal{V^I}$, the Gersgorin Disc Theorem gives that all the eigenvalues of $A_w-D_w$ are in the open left half plane and thus, there is no zero eigenvalue. Therefore, the only solution to \eqref{eq:Aw-Dw_0} is $w^*=\mathbf{0}$. 
    \hfill 
\end{proof}

The proposition implies that the system has an infinite number of healthy equilibria, whereas an endemic equilibrium does not exist, i.e., $\nexists~x^*, w^*$ s.t. $x^* \neq 0, w^*\neq 0$. Therefore, the threshold behavior that is determined by the basic reproduction number for the networked SIS model, namely $R_0=\rho(D^{-1}B)$~\cite{PARE2020345}, does not apply to our model. 
Rather, we are interested in understanding the transient behavior of the networked SIR model in \eqref{eq:z}. 
To that end, we turn our focus to studying the effective reproduction number of the network which is key for predicting the network-wide dynamical behavior. 

\subsection{Effective Reproduction Number}

We first introduce the concept of the global effective reproduction number for the multilayer network. 

\begin{definition}[Global Effective Reproduction Number]\label{def:R_t}
Under Assumption~\ref{assump:parameters}, the global (network-wide) effective reproduction number is denoted by $R(t) = \rho\big(H(s(t))D_f^{-1}B_f\big)$, where $H(s(t))$, $B_f$, and $C_f$ are given in \eqref{eq:matrices}.
\end{definition}

Given that $R(t)$ accounts for the evolution of the virus for the multilayer networked SIR model, we characterize the dynamical behavior of a weighted average of the components of $z(t)$. To that end, we first define the notion of peak infection time.  

\begin{definition}[Peak Infection Time]\label{def:peak_infection_time}
    Let $v(t)\in\mathbb{R}^{n+m}$ be a positive normalized vector. A peak infection time $\tau_p$ of the weighted average $v(\tau_p)^\top z(t)$ is such that $v(\tau_p)^\top z(\tau_p)>v(\tau_p)^\top z(t)$ for all $t\geq0$, $t\neq\tau_p$.
\end{definition}

We now characterize the threshold behavior of the multilayer networked model~\eqref{eq:vector_model} in the following theorem.

\begin{theorem}\label{theo:global_effective}
    Assume Assumption~\ref{assump:parameters} holds and $s(0)\gg\mathbf{0}$ and $z(0)>\mathbf{0}$. Let $v(t)$ be the normalized left eigenvector associated with the eigenvalue $\lambda_{\max}(t)$ of the matrix $H(s(t))B_f-D_f$. The following claims hold:
    \begin{enumerate}[label=\roman*)]
        \item The effective reproduction number $R(t)$ is monotonically decreasing with respect to $t$. \label{theo:R_decreasing}
        \item Assume there is a time $\tau>0$ such that $v(\tau)^\top z(t)$ is increasing for all $t\leq \tau$ in a sufficiently small time interval $[t,\tau]$. Then, $R(t)>1$, for all $t\leq\tau$. \label{theo:R_great_1}
        \item Let the time $\tau_p\geq 0$ satisfy $R(\tau_p)=1$. Then, the weighted average $v(\tau_p)^\top z(t)$ reaches a maximum value at $t=\tau_p$ and the peak infection is unique.\label{theo:peak_global}
        \item If the weighted average $v(\tau)^\top z(t)$ is decreasing for all $t\geq \tau$ in a sufficiently small time interval $[\tau,t]$. Then $R(t)<1$, for all $t\geq\tau$. \label{theo:R_less_1}
    \end{enumerate}
\end{theorem}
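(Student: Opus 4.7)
My overall approach is to translate each claim into a statement about the Perron--Frobenius eigenstructure of the Metzler matrix $M(t) := H(s(t))B_f - D_f$, exploiting (a) Lemma~\ref{lem:s_decrease} (monotonicity of $s$) and (b) the standard equivalence between the spectral radius of $H(s(t))D_f^{-1}B_f$ and the sign of $\lambda_{\max}(t) := \lambda_{\max}(M(t))$.

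For claim~\ref{theo:R_decreasing}, I would argue that $H(s(t))D_f^{-1}B_f$ is a non-negative matrix (by Assumption~\ref{assump:parameters} and the fact that $D_f$ is positive diagonal so $D_f^{-1}B_f \geq 0$). As $t$ increases, Lemma~\ref{lem:s_decrease} shows $s(t)$ decreases monotonically, so $H(s(t))$ decreases entry-wise, and hence so does $H(s(t))D_f^{-1}B_f$. Monotonicity of the spectral radius of non-negative matrices (Perron--Frobenius) then yields that $R(t) = \rho(H(s(t))D_f^{-1}B_f)$ is monotonically decreasing.

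Before addressing the remaining claims, I would record the key fact that, because $M(t) = H(s(t))B_f - D_f$ is Metzler with non-negative ``transmission'' part $H(s(t))B_f$ and positive diagonal ``transition'' part $D_f$, one has the equivalence $R(t) \gtreqless 1 \iff \lambda_{\max}(t) \gtreqless 0$ (a standard next-generation-matrix identity, provable in a couple of lines via Perron--Frobenius applied to the resolvent $(D_f - H(s(t))B_f)^{-1}$ or via the argument in \cite{VANDENDRIESSCHE2017288}). Since $B_f$ is irreducible and $s(t) \gg 0$ for all $t$ (this follows from $s(0) \gg 0$ and \eqref{eq:s_vector}, which keeps each $s_i$ positive), the left Perron eigenvector $v(t)$ is strictly positive.

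For claims~\ref{theo:R_great_1} and~\ref{theo:R_less_1}, I would differentiate $\phi_\tau(t) := v(\tau)^\top z(t)$ and evaluate at $t = \tau$. Using \eqref{eq:z} and the defining property $v(\tau)^\top M(\tau) = \lambda_{\max}(\tau) v(\tau)^\top$, one obtains $\dot\phi_\tau(\tau) = v(\tau)^\top M(\tau) z(\tau) = \lambda_{\max}(\tau)\, v(\tau)^\top z(\tau)$. Since $v(\tau) \gg 0$ and $z(\tau) > 0$, the factor $v(\tau)^\top z(\tau)$ is strictly positive, so the sign of $\dot\phi_\tau(\tau)$ matches the sign of $\lambda_{\max}(\tau)$, hence of $R(\tau) - 1$. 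Thus if $\phi_\tau$ is increasing at $\tau^-$ we get $R(\tau) \geq 1$, and combined with the strict decrease from claim~\ref{theo:R_decreasing} (which is strict whenever $\dot s(t) \neq 0$, guaranteed under the hypotheses), we conclude $R(t) > 1$ for all $t \leq \tau$; the decreasing case is symmetric.

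For claim~\ref{theo:peak_global}, which I expect to be the main obstacle because it requires more than just a first-derivative test, I would use a clever rewriting. Set $\phi(t) = v(\tau_p)^\top z(t)$ and again differentiate to get $\dot\phi(t) = v(\tau_p)^\top (H(s(t))B_f - D_f) z(t)$. Because $R(\tau_p) = 1$ gives $\lambda_{\max}(\tau_p) = 0$, the left eigenvector identity yields $v(\tau_p)^\top D_f = v(\tau_p)^\top H(s(\tau_p)) B_f$. Substituting this into $\dot\phi(t)$ produces the cancellation
\begin{equation*}
\dot\phi(t) = v(\tau_p)^\top \bigl[H(s(t)) - H(s(\tau_p))\bigr] B_f\, z(t).
\end{equation*}
Now $H(s(t)) - H(s(\tau_p))$ is diagonal and, by Lemma~\ref{lem:s_decrease}, is entry-wise non-negative for $t < \tau_p$ and non-positive for $t > \tau_p$. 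Since $v(\tau_p) \gg 0$, $B_f \geq 0$, and $z(t) > 0$, this forces $\dot\phi(t) \geq 0$ on $[0,\tau_p]$ and $\dot\phi(t) \leq 0$ on $[\tau_p,\infty)$, with strict inequalities away from $\tau_p$ (because $s$ is strictly decreasing wherever $z \neq 0$, and irreducibility of $B_f$ propagates strictness). This establishes that $\phi$ attains a unique maximum at $\tau_p$, completing the proof.
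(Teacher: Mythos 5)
Your proposal is correct, and for claims i), ii), and iv) it follows essentially the same route as the paper: monotonicity of $\rho$ under the entrywise decrease of $H(s(t))D_f^{-1}B_f$, and the left-eigenvector computation $v(\tau)^\top \dot z(\tau)=\lambda_{\max}(\tau)\,v(\tau)^\top z(\tau)$ combined with the sign equivalence $R\gtreqless 1 \iff \lambda_{\max}\gtreqless 0$. Two small remarks there: your Perron--Frobenius justification of claim i) is actually tighter than the paper's (which asserts $R(t)$ is a ``linear function of $s(t)$'', which the spectral radius is not), and your evaluation of the derivative exactly at $t=\tau$ avoids the paper's approximation step $v(\tau)\approx v(t)$ for $t$ near $\tau$. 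Where you genuinely diverge is claim iii). The paper argues via an $\varepsilon$-neighborhood of $\tau_p$: since $R$ is decreasing and crosses one at $\tau_p$, the weighted average is increasing just before and decreasing just after, and uniqueness follows from monotonicity of $R$ --- an argument that implicitly identifies the behavior of $v(\tau_p)^\top z(t)$ with that of $v(t)^\top z(t)$ for $t\neq\tau_p$. Your cancellation
\begin{equation*}
\dot\phi(t)=v(\tau_p)^\top\bigl[H(s(t))-H(s(\tau_p))\bigr]B_f\,z(t)
\end{equation*}
sidesteps that entirely: it pins down the sign of the derivative of the \emph{fixed} weighted average $v(\tau_p)^\top z(t)$ for all $t$, not just in a small neighborhood, and hence gives the global maximum and its uniqueness in one stroke. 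This is a cleaner and more rigorous treatment of iii); the only loose end is your strictness claim, which deserves one more line (namely that $z(0)>\mathbf{0}$ plus irreducibility of $B_f$ gives $z(t)\gg\mathbf{0}$ and hence $\dot s(t)\ll\mathbf{0}$ for $t>0$, so $H(s(t))-H(s(\tau_p))$ is nonzero in its population block whenever $t\neq\tau_p$).
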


\begin{proof}
    Statement~\ref{theo:R_decreasing} is the immediate consequence of Lemma~\ref{lem:s_decrease} and Definition~\ref{def:R_t}, i.e., the susceptible states of each location in the population network are monotonically decreasing and $R(t)$ is a positive continuous linear function of $s(t)$.
    
    Now we consider the case when the weighted average of the components of $z(t)$ increases (statement \ref{theo:R_great_1}). Since $s(0)\gg\mathbf{0}$, by \eqref{eq:s_vector}, $s(t)\gg\mathbf{0}$ for all $t\leq\tau$. Therefore, since under Assumption~\ref{assump:parameters} $H(s(t))B_f-D_f$ is an irreducible Metzler matrix, $\lambda_{\max}(t)$ is a simple eigenvalue for all $t\in[0,\tau]$ by~\cite[Lemma~$7$]{varga}. Additionally, the normalized left eigenvector $v(t)$ satisfies $v(t)\gg\mathbf{0}$ and $v(t)^\top\mathbf{1}_{n+m}=1$ for all $t\in[0,\tau]$.
    By left multiplying $v(\tau)^\top$ on both sides of \eqref{eq:z} we have:
    \begin{align*}
        v(\tau)\dot{z}(t) &= v(\tau)^\top \big(H(s(t))B_f-D_f\big)z(t) \\
        &\approx v(t)^\top \big(H(s(t))B_f-D_f\big)z(t) \\
        &=\lambda_{\max}(t)v(t)^\top z(t).
    \end{align*}
    If $v(\tau)^\top z(t)$ is increasing for all $t\leq \tau$, then $v(\tau)^\top \dot{z}(t)>0$ for all $t\leq\tau$. Since $v(\tau)^\top \dot{z}(t)>0$ for all $z(t)>\mathbf{0}$ and $v(\tau)\gg\mathbf{0}$, then $\lambda_{\max}(t)>0$. In an analogous way, statement \ref{theo:R_less_1} can be proven.
    

    For statement \ref{theo:peak_global}, we start by recalling~\cite[Proposition~$1$]{liu_bivirus}: $R(\tau_p)=1$ is equivalent to $\lambda_{\max}(\tau_p)=0$. Thus, we have that $\lambda_{\max}(\tau_p)v(\tau_p)^\top z(\tau_p)=0$ for $z(t)>0$. By the definition of $\lambda_{\max}(\tau_p)$, we obtain 
    \begin{align*}
        0&=\lambda_{\max}(\tau_p)v(\tau_p)^\top z(\tau_p) \\
        &=v(\tau_p)^\top\big(H(s(\tau_p)B_f-D_f\big)z(\tau_p) \\
        &= \frac{d}{dt}\big(v(t)^\top z(t)\big)\Big|_{t=\tau_p }.
    \end{align*}
    Thus, when $R(\tau_p)=1$, we can guarantee there will be an extremum point at $\tau_p$. We still need to prove that it is a maximum and that it is unique. Based on statement \ref{theo:R_decreasing}, we have that $R(t)$ is decreasing, which implies that $R(\tau_p-\varepsilon)>R(\tau_p)>R(\tau_p+\varepsilon)$. Given that $R(\tau_p)=1$, we have that, for the interval $[\tau_p-\varepsilon,\tau_p)$, the weighted average $v(\tau_p)^\top z(t)$ is increasing, and, for the interval $(\tau_p,\tau_p+\varepsilon]$, it is decreasing. Therefore, at time $\tau_p$ the weighted average will have a peak infection. Given the monotonicity of $R(t)$, once the weighted average is decreasing at a time $\tau$, i.e., $R(\tau)<1$, there does not exist a time $\tau'>\tau$ such that $R(\tau')>1$. Thus, the peak infection is unique at time $\tau_p$.
    \hfill 
\end{proof}

An important insight that Theorem~\ref{theo:global_effective} provides is that $R(\tau_p)=1$ is a sufficient condition to predict the peak infection, and thus guarantees that $\tau_p$ is the peak infection time of the weighted average of the infection across the multilayer networked SIR model. Another takeaway is that the uniqueness of the peak infection is due to the monotonicity of $R(t)$ with respect to $t$ (Theorem~\ref{theo:global_effective}~{\ref{theo:R_decreasing}}). Hence, we have the following direct conclusion from Theorem~\ref{theo:global_effective} related to the dynamical behavior of the weighted average once it starts decreasing.

\begin{corollary}\label{coro:R_always_decrease}
    Assume that at a given time $\tau\geq 0,~R(\tau)<1$. Then, the weighted average $v(\tau)^\top z(t)$, for $t\geq\tau$, is monotonically exponentially decreasing to zero.
\end{corollary}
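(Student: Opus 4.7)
The plan is to leverage Theorem~\ref{theo:global_effective}~\ref{theo:R_decreasing} together with Lemma~\ref{lem:s_decrease} to turn the time-varying closed-loop dynamics into an autonomous upper bound at the fixed time $\tau$, and then run a standard Gr\"onwall-type comparison. Since $R(\tau)<1$, the same Perron-type equivalence invoked in the proof of Theorem~\ref{theo:global_effective}~\ref{theo:peak_global} (i.e., Proposition~$1$ of \cite{liu_bivirus}) yields $\lambda_{\max}(\tau)<0$ for the irreducible Metzler matrix $H(s(\tau))B_f-D_f$; this number will play the role of the exponential decay rate.

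The key step is the following sign chase. For $t\geq\tau$, Lemma~\ref{lem:s_decrease} gives $s(t)\leq s(\tau)$ entrywise; combined with $B_f\geq 0$ under Assumption~\ref{assump:parameters}, this produces the entrywise matrix inequality
\begin{equation*}
    H(s(t))B_f - D_f \;\leq\; H(s(\tau))B_f - D_f .
\end{equation*}
Define $u(t):=v(\tau)^\top z(t)$. Left-multiplying \eqref{eq:z} by the strictly positive left Perron vector $v(\tau)^\top$ and using $z(t)\geq\mathbf{0}$ converts the entrywise inequality into the scalar differential inequality
\begin{equation*}
    \dot{u}(t) \;\leq\; v(\tau)^\top\!\bigl(H(s(\tau))B_f - D_f\bigr) z(t) \;=\; \lambda_{\max}(\tau)\, u(t).
\end{equation*}
A standard comparison/Gr\"onwall argument then gives $u(t)\leq u(\tau)\,e^{\lambda_{\max}(\tau)(t-\tau)}$, which is precisely the advertised exponential decay to zero since $\lambda_{\max}(\tau)<0$.

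For the strict monotonicity half of the claim, I would argue that $u(t)$ stays strictly positive: the governing matrix $H(s(t))B_f-D_f$ is Metzler, and $z(0)>\mathbf{0}$, so $z(t)\geq\mathbf{0}$ and nonzero for all $t\geq 0$; coupled with $v(\tau)\gg\mathbf{0}$ this ensures $u(t)>0$, and the displayed differential inequality then forces $\dot{u}(t)<0$ for all $t\geq\tau$.

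The main obstacle I anticipate is the passage from the entrywise matrix inequality to the scalar inequality on $\dot{u}$: it genuinely uses both the non-negativity of $B_f$ (so that multiplying by the non-negative vector $z(t)$ preserves the inequality) and the strict positivity of $v(\tau)$, both of which are available under Assumption~\ref{assump:parameters} but should be invoked explicitly. A secondary subtlety is rigorously pinning down $\lambda_{\max}(\tau)<0$ from $R(\tau)<1$ for the Metzler matrix at hand; this is the Perron-type equivalence already used earlier in the paper, and I would simply cite it rather than reprove it.
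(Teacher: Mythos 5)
Your proposal is correct and follows essentially the same route as the paper's proof: freeze the system matrix at time $\tau$ using the monotone decrease of $s(t)$, left-multiply \eqref{eq:z} by $v(\tau)^\top$ to obtain the scalar differential inequality $\dot{u}(t)\leq\lambda_{\max}(\tau)u(t)$, and conclude via $R(\tau)<1\Leftrightarrow\lambda_{\max}(\tau)<0$ from \cite[Proposition~1]{liu_bivirus}. You are somewhat more careful than the paper in spelling out the Gr\"onwall step (anchored at $\tau$ rather than $0$) and the non-negativity facts needed to pass from the entrywise matrix inequality to the scalar one, but these are refinements of the same argument, not a different one.
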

   
\begin{proof}
     We first compute the derivative of the weighted average
     \begin{align*}
         \frac{d}{dt}\big(v(\tau)^\top z(t)\big) &= v(\tau)^\top \big(H(s(t)B_f-D_f)\big)z(t)\\
         &\leq v(\tau)^\top \big(H(s(\tau)B_f-D_f)z(t)\\
         &=\lambda_{\max}(\tau)v(\tau)^\top z(t),
     \end{align*}
     where the inequality comes from the decreasing monotonicity of $s(t)$ and the equality from the definition of $v(\tau)$. Note that
     \[v(\tau)^\top z(t) \leq v(\tau)^\top z(0)e^{\lambda_{\max}(\tau)\cdot t}.\]
     From~\cite[Proposition~$1$]{liu_bivirus}, $R(\tau)<1$ is equivalent to $\lambda_{\max}(\tau)<0$. Thus, the right-hand side decays exponentially to zero.
     \hfill 
\end{proof}






So far, we have provided sufficient conditions to characterize the dynamical behavior of a weighted average of $z(t)$. However, the claims made for the weighted average at the network level cannot be extrapolated to predict local spreading behavior. Moreover, given that $x(t)$ and $w(t)$ are modeled in different ways, i.e., $x(t)$ has nonlinear dynamics and $w(t)$ is linear, these distinctions are not captured by the behavior of the weighted average. 
Therefore, in the following section, we define the distributed reproduction numbers for the novel multilayer networked SIR model and elaborate on sufficient conditions to analyze the local dynamical behavior. 


\section{Distributed Reproduction Numbers}\label{sec:DRNs}

Based on the threshold conditions for the weighted average in Theorem~\ref{theo:global_effective}, we cannot guarantee that all nodes in the network will reach the peak infection at the same time. Further, $R(0)<1$ does not necessarily imply that the epidemic is dying out immediately for all $i\in\mathcal{V}$. Thus, we introduce the concept of DRNs to provide a finer granularity in the analysis of the networked spreading process. 
 The distributed reproduction numbers for each node in the multilayer network are defined in a \textit{pairwise} fashion, considering the rate at which a particular node $i$ becomes infected by a node $j$ over time. In general, the DRNs of any node $i$ associated with node $j$ is the expected number of new infections caused by the scaled infected proportion of individuals in node $j$ given that some individuals in node $i$ may no longer be susceptible. 


The previous definition establishes that a particular node will have as many reproduction numbers as in-neighbors has. Since the infection level of a particular node is influenced by different sources, i.e., human interaction or resource nodes, it is necessary to analyze the infection contribution of different in-neighbors at the same scale. We define the DRNs for the population and infrastructure network.

\begin{definition}[Population Network DRNs]\label{def:DRN_population}

Let Assumption~\ref{assump:parameters} hold and assume $x_i(t)>0$. For each location $i \in \mathcal{V^P}$, the DRNs are given by the following piecewise function
\[R_{ij}(t) = s_i(t)\frac{I_i(j,t)}{\gamma_i x_i(t)}, 
\]
where
\[I_i(j,t)=\begin{cases}
    \beta_{ij}x_j(t), & \text{if $j\in\mathcal{V}^P$} \\
    \beta_{ij}^ww_j(t), & \text{if $j\in\mathcal{{V}}^I$}.
\end{cases}\]

\end{definition}

Note that $R_{ij}(t)$ in Definition~\ref{def:DRN_population} also accounts for the new infection cases generated within group $i \in \mathcal{V^P}$ itself, i.e., $R_{ii}(t)$. 

From Lemma~\ref{lem:s_decrease}, we know that the susceptible proportion of a given location in the population network is decreasing w.r.t. time. However, since the contamination of the resource nodes is interpreted as the flow of the virus through the node, we can argue that all $j\in\mathcal{V^I}$ are always susceptible to contamination. On the other hand, the concentration of the virus has a decay rate $\gamma_j^w$ and it is diluted among other resource nodes according to \eqref{eq:w_node}. Thus, the total healing rate for a resource node $j \in \mathcal{V^I}$ is given by $\gamma_j - \text{diag}(A_w)_j$. Consequently, we characterize the DRNs of the resource nodes in the following definition.

\begin{definition}[Infrastructure Network DRNs] \label{def:DRN_infrastructure}
    Let Assumption~\ref{assump:parameters} hold and assume $w_j(t)>0$. For each resource node $j \in \mathcal{V^I}$, the DRNs are given by the following piecewise function 
    \vspace{-2ex}
    \[ R_{jk}(t)= \frac{I^w_j(k,t)}{\big(\gamma_j^w - \text{diag}(A_w)_{jj}\big)w_j(t)}, \]
    where
    \[I^w_j(k,t)=\begin{cases}
        c_{kj}^wx_k(t), & \text{if $k\in\mathcal{V^P}$} \\ 
        \alpha_{kj}w_k(t), & \text{if $k \in \mathcal{V^I}$}.
    \end{cases}\]
\end{definition}

To leverage the information that the DRNs capture, one can compute the evolution of the infected proportion of a particular node $i$, associated with a pair of nodes $(j,k)$ where $j \in \mathcal{V^P}$ and $k \in \mathcal{V^I}$. For a node in the population network, $i \in \mathcal{V^P}$, we define its infected proportion associated with a pair of nodes $(j,k)$, where $j\in \mathcal{V^P}$ and $k\in\mathcal{V^I}$, as 
\begin{equation}\label{eq:xijk}
    \dot{x}_{ij}^k(t) = s_i(t)\big(\beta_{ij}x_j(t) + \beta_{ik}^ww_k(t)\big) - \gamma_ix_i(t). 
\end{equation}

\begin{lemma} \label{lem:x_ijk}
    The infected proportion at node $i$ associated with a pair of nodes $(j,k)$ where $j\in \mathcal{V^P}$ and $k\in\mathcal{V^I}$, denoted by $x_{ij}^k(t)$, is increasing if and only if $R_{ij}(t)+R_{ik}(t)>1$, and it is decreasing if and only if $R_{ij}(t)+R_{ik}(t)<1$. 
\end{lemma}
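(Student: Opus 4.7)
The plan is to verify the claim by direct algebraic manipulation: substitute the definitions of the two DRNs into the right-hand side of \eqref{eq:xijk} and factor out the common positive quantity $\gamma_i x_i(t)$, so that the sign of $\dot{x}_{ij}^k(t)$ is controlled entirely by the sign of $R_{ij}(t)+R_{ik}(t)-1$.

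Concretely, first I would use Definition~\ref{def:DRN_population} with $j\in\mathcal{V}^P$ to write $s_i(t)\beta_{ij}x_j(t) = \gamma_i x_i(t)\,R_{ij}(t)$, and with $k\in\mathcal{V}^I$ to write $s_i(t)\beta_{ik}^w w_k(t) = \gamma_i x_i(t)\,R_{ik}(t)$. Plugging these into \eqref{eq:xijk} gives
\[
\dot{x}_{ij}^k(t) \;=\; \gamma_i x_i(t)\bigl(R_{ij}(t)+R_{ik}(t)\bigr) - \gamma_i x_i(t) \;=\; \gamma_i x_i(t)\bigl(R_{ij}(t)+R_{ik}(t)-1\bigr).
\]
Next I would invoke Assumption~\ref{assump:parameters} ($\gamma_i>0$) together with the standing hypothesis $x_i(t)>0$ built into the definition of $R_{ij}(t)$ to conclude that the prefactor $\gamma_i x_i(t)$ is strictly positive. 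Hence $\dot{x}_{ij}^k(t)>0$ if and only if $R_{ij}(t)+R_{ik}(t)>1$, and $\dot{x}_{ij}^k(t)<0$ if and only if $R_{ij}(t)+R_{ik}(t)<1$, yielding both directions of the biconditional.

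There is no substantive obstacle here; the lemma is essentially a restatement of the auxiliary dynamics \eqref{eq:xijk} in the language of the DRNs just defined. The only subtlety worth a one-line comment is that the positivity of $x_i(t)$ must be maintained on the time interval of interest so that the DRNs are well-defined, which follows from Lemma~\ref{lem:possitivity} together with the assumption $x_i(t)>0$ carried from Definitions~\ref{def:DRN_population} and~\ref{def:DRN_infrastructure}.
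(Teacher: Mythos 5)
Your proof is correct and follows essentially the same route as the paper's: both reduce the sign of $\dot{x}_{ij}^k(t)$ in \eqref{eq:xijk} to the sign of $R_{ij}(t)+R_{ik}(t)-1$ by dividing (or, in your case, factoring) out the strictly positive quantity $\gamma_i x_i(t)$ and invoking Definition~\ref{def:DRN_population}. The explicit factored identity $\dot{x}_{ij}^k(t)=\gamma_i x_i(t)\bigl(R_{ij}(t)+R_{ik}(t)-1\bigr)$ is a slightly cleaner presentation of the same computation.
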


\begin{proof}
    The following proof holds for either $i \in \mathcal{V^P}$ or $i \in \mathcal{V^I}$. We prove the case for any location $i$ in the population network, i.e., $i\in\mathcal{V^P}$. 
    
    From \eqref{eq:xijk}, $\dot{x}_{ij}^k(t)>0$ if and only if 
    \[\frac{s_i(t)}{\gamma_ix_i(t)}\big(\beta_{ij}x_j(t) + \beta_{ik}^ww_k(t)\big)>1.\]
    By Definition~\ref{def:DRN_population}, it follows that $\dot{x}_{ij}^k(t)$ is increasing if and only if $R_{ij}(t)+R_{ik}(t)>1$. We can use the same process to show that $\dot{x}_{ij}^k(t)$  is decreasing if and only if $R_{ij}(t)+R_{ik}(t)<1$.
\hfill 
\end{proof}

Up to this point, we have defined the DRN's based on the interaction between two nodes depending on which layer of the network they belong to (Definitions~\ref{def:DRN_population} and \ref{def:DRN_infrastructure}). In addition, we have shown the threshold behavior for the rate of infection at each node in terms of the DRN's (Lemma~\ref{lem:x_ijk}). However, Lemma~\ref{lem:x_ijk} only accounts for the monotonicity of the infection at a given node considering a particular combination of two sources of infection. To characterize the spreading process for each node in the multilayer network, we further define the \textit{local effective reproduction number (LERN)} which builds off the DRNs in Definitions~\ref{def:DRN_population} and \ref{def:DRN_infrastructure}.

\begin{definition}[LERNs]\label{def:local_ERN}
    For any node $i\in\mathcal{V}$, the LERN, denoted by $R_i(t)$, is the expected number of new infections caused by all possible sources of infection/contamination 
    \[R_i(t)=\sum_{j\in\mathcal{V}}R_{ij}(t).\]
\end{definition}

We characterize a threshold dynamical behavior result for any node $i\in\mathcal{V}$ in the following theorem.

\begin{theorem}\label{theo:Rt_local}
    Let Assumption~\ref{assump:parameters} hold. For a given location $i \in \mathcal{V^P}$,
     assume $x_i(t)>0$. For a given resource $j \in \mathcal{V^I}$, assume $w_j(t)>0$. Then, for a given time $t\geq 0$, the following claims hold:
    \begin{enumerate}[label=\roman*)]
        \item 
        $R_i(t)>1$ if and only if $x_i(t)$ is increasing, and 
        $R_j(t)>1$ if and only if $w_j(t)$ is increasing. \label{theo:local_ERN_increasing}
        \item 
        $R_i(t)<1$ if and only if $x_i(t)$ is decreasing, and 
        $R_j(t)<1$ if and only if $w_j(t)$ is decreasing. \label{theo:local_ERN_decreasing}
    \end{enumerate}
\end{theorem}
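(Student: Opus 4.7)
The approach is to show, for each node, that its time derivative factors as a strictly positive quantity times $(R_i(t) - 1)$ or $(R_j(t) - 1)$. Both the ``if'' and ``only if'' directions of each claim then follow at once from a single sign equivalence, so only one calculation is needed per node type.

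For the population case ($i \in \mathcal{V^P}$), I would combine Definitions~\ref{def:DRN_population} and~\ref{def:local_ERN} to obtain
\[R_i(t) = \frac{s_i(t)}{\gamma_i x_i(t)}\Bigl(\sum_{j \in \mathcal{V^P}} \beta_{ij} x_j(t) + \sum_{j \in \mathcal{V^I}} \beta_{ij}^w w_j(t)\Bigr),\]
and then read off from \eqref{eq:model} the clean identity $\dot{x}_i(t) = \gamma_i x_i(t)\,(R_i(t) - 1)$. Because $\gamma_i > 0$ by Assumption~\ref{assump:parameters} and $x_i(t) > 0$ by hypothesis, the prefactor is strictly positive, so $\dot{x}_i(t)$ and $R_i(t) - 1$ have the same sign. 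This delivers the population-node halves of both claims \ref{theo:local_ERN_increasing} and \ref{theo:local_ERN_decreasing} simultaneously.

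For the infrastructure case ($j \in \mathcal{V^I}$), the plan is analogous: combine Definitions~\ref{def:DRN_infrastructure} and~\ref{def:local_ERN} to write
\[R_j(t) = \frac{1}{\bigl(\gamma_j^w - \text{diag}(A_w)_{jj}\bigr)\,w_j(t)}\Bigl(\sum_{k \in \mathcal{V^P}} c_{kj}^w x_k(t) + \sum_{k \in \mathcal{V^I}} \alpha_{kj} w_k(t)\Bigr),\]
and then reorganize the right-hand side of \eqref{eq:w_node} to exhibit the companion identity $\dot{w}_j(t) = \bigl(\gamma_j^w - \text{diag}(A_w)_{jj}\bigr)\,w_j(t)\,(R_j(t) - 1)$. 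By Assumption~\ref{assump:parameters} the coefficient $\gamma_j^w - \text{diag}(A_w)_{jj} > 0$, and $w_j(t) > 0$ by hypothesis, so the prefactor is again strictly positive and the sign equivalence holds.

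The main obstacle is the algebraic bookkeeping in the infrastructure case: one must isolate the coefficient of $w_j(t)$ on the right-hand side of \eqref{eq:w_node}, accounting simultaneously for the decay term $-\gamma_j^w w_j$ and the outflow contribution $-w_j\sum_k \alpha_{jk}$ (and the self-loop $\alpha_{jj}$ that sits inside $\text{diag}(A_w)_{jj}$), and check that it matches the denominator used in Definition~\ref{def:DRN_infrastructure}. Once this matching is verified, combining the two cases proves both \ref{theo:local_ERN_increasing} and \ref{theo:local_ERN_decreasing}. It is worth noting that this result can be viewed as the ``sum over all in-neighbors'' analogue of Lemma~\ref{lem:x_ijk}, which treated only a single pair of infection sources.
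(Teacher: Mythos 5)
Your proposal is correct and follows essentially the same route as the paper's proof: rewrite $R_i(t)$ via Definitions~\ref{def:DRN_population} and~\ref{def:local_ERN} so that $\dot{x}_i(t)$ and $R_i(t)-1$ share the same sign (the paper phrases this as rearranging the inequality $R_i(t)>1$ into $\dot{x}_i(t)>0$ and back, which is your factorization $\dot{x}_i(t)=\gamma_i x_i(t)\,(R_i(t)-1)$ in disguise). The only difference is that you explicitly carry out the infrastructure-node bookkeeping, including the $\alpha_{jj}$ self-loop issue, which the paper dismisses with ``the same procedure holds''---a worthwhile addition, not a divergence.
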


\begin{proof}
    Regarding statement \ref{theo:local_ERN_increasing}, we will prove the case for an arbitrary node $i\in\mathcal{V^P}$. The same procedure holds for the case of any resource node $j \in \mathcal{V^I}$.
    
    $(\Leftarrow):$ According to Definition~\ref{def:local_ERN}, the effective reproduction number of a location is given by $R_i(t)=\sum_{j\in\mathcal{V}}R_{ij}(t)$
    . If we evaluate the summation operator, we have that $R_i(t)=\sum_{j\in\mathcal{V^P}}R_{ij}(t) + \sum_{j\in\mathcal{V^I}}R_{ij}(t)$. Based on Definition~\ref{def:DRN_population}, we can rewrite $R_i(t)$ as 
    \[R_i(t)=\frac{s_i(t)}{\gamma_ix_i(t)}\Bigg(\sum_{j\in\mathcal{V^P}}\beta_{ij}x_j(t)+\sum_{j\in\mathcal{V^I}}\beta_{ij}^ww_j(t)\Bigg).\]
    Hence, if $R_i(t)>1$, it is true that 
    \begin{align}
        \frac{s_i(t)}{\gamma_ix_i(t)}\Bigg(\sum_{j\in\mathcal{V^P}}\beta_{ij}x_j(t)+\sum_{j\in\mathcal{V^I}}\beta_{ij}^ww_j(t)\Bigg) &>1 \label{eq:Ri_pop}\\ 
        s_i(t)\Bigg(\sum_{j\in\mathcal{V^P}}\beta_{ij}x_j(t) + \sum_{j\in\mathcal{V^I}}\beta_{ij}^ww_j(t)\Bigg) - \gamma_ix_i(t) &>0, \label{eq:derivative}
    \end{align}
    which implies $\dot{x}_i(t)>0$. Therefore, $x_i(t)$ is increasing. 

    $(\Rightarrow):$ If the infected proportion at node $i$ is increasing, then $\dot{x}_i(t)>0$, which is given by \eqref{eq:derivative}.  By rearranging terms, we obtain \eqref{eq:Ri_pop}. It immediately follows that $R_i(t)>1$, given that $x_i(t)>0$.

    To prove the decreasing behavior in statement \ref{theo:local_ERN_decreasing}, we follow the same technique as for statement~\ref{theo:local_ERN_increasing}.
    \hfill 
\end{proof}

\begin{remark}
    Note that in Theorem~\ref{theo:Rt_local} all the statements hold with strict inequality. That is, up to this point, we cannot guarantee that when $R_i(t)=1$, a node $i\in\mathcal{V}$ has reached a peak infection or further conclude it is unique.
\end{remark}

\section{Global Behavior from LERNs}
In this section, we aim to integrate the framework developed in Sections~\ref{sec:network_level_analysis} and \ref{sec:DRNs}. More precisely, given some assumptions on the LERNs, we infer a threshold behavior at the network level.
In the analysis that follows, we denote $\Tilde{R}_{ij}(t)$ as the unscaled DRN of node $i$ associated with node~$j$. Based on Assumption~\ref{assump:parameters}, Definitions~\ref{def:DRN_population} and \ref{def:DRN_infrastructure}, we have that, for any location $i \in \mathcal{V^P}$, the unscaled DRNs are given by
\begin{align} \label{eq:unscaled_pop}
    \Tilde{R}_{ij}(t) &= s_i(t)\frac{\Tilde{I}_{ij}}{\gamma_i},~\text{where}~\Tilde{I}_{ij}= \begin{cases}
        \beta_{ij}, & \text{if $j\in\mathcal{V^P}$} \\
        \beta_{ij}^w, & \text{if $j\in\mathcal{V^I}$},
    \end{cases}
\end{align}
and, for any resource node $j\in\mathcal{V^I}$, the unscaled DRNs are computed as 
\begin{align} \label{eq:unscaled_infra}
    \Tilde{R}_{jk} &= \frac{\Tilde{I}_{ij}^w}{\gamma_j^w-\text{diag}(A_w)_j},~\text{where}~\Tilde{I}_{ij}= \begin{cases}
        c_{kj}^w, & \text{if $j\in\mathcal{V^P}$} \\
        \alpha_{kj}, & \text{if $j\in\mathcal{V^I}$}.
    \end{cases}
\end{align}

We leverage the unscaled DRNs to define submatrices that describe the infection contribution considering all interactions across the multilayer network, i.e., person-to-person, resource-to-person, person-to-resource, resource-to-resource.
Let the matrix $\mathcal{R_P}(t)$ denote the DRNs of each node $i\in\mathcal{V^P}$ associated with all nodes $j\in\mathcal{V^P}$. In other words, $\mathcal{R_P}(t)$ describes the infection contribution between all the locations in the population network in a distributed fashion. More precisely, the $(i,j)-\text{th}$ entry of $\mathcal{R_P}(t)$ is given by $[\mathcal{R_P}(t)]_{ij}=\Tilde{R}_{ij}(t)$ for all $i,j\in\mathcal{V^P}$. On the other hand, the matrix $\mathcal{R_{I\rightarrow P}}(t)$ accounts for the infection contribution from all the resource nodes in the infrastructure network to the population nodes, i.e., $[\mathcal{R_{I\rightarrow P}}(t)]_{ij} = \Tilde{R}_{ij}(t)$ for all $i\in\mathcal{V^P}$ and all $j\in\mathcal{V^I}$. In the same way, the contamination of the resource nodes from each location in the population network is given by $[\mathcal{R_{P\rightarrow I}}(t)]_{jk}=\Tilde{R}_{jk}(t)$ for all $j\in\mathcal{V^I}$ and $k\in\mathcal{V^P}$. Finally, $[\mathcal{R_I}(t)]_{jk}=\Tilde{R}_{jk}(t)$ for all $j,k\in \mathcal{V^I}$, accounts for the contamination between resource nodes in the infrastructure network. Now, we define the effective reproduction matrix of the multilayer network, denoted by $\mathcal{R}(t)$.

\begin{definition}[Global Effective Reproduction Matrix] \label{def:Rt_matrix}
    The effective reproduction matrix of the network is given by 
    \[
    \mathcal{R}(t) = \begin{bmatrix}
        \mathcal{R_P}(t) & \mathcal{R_{I\rightarrow P}}(t) \\
        \mathcal{R_{P\rightarrow I}}(t) & \mathcal{R_I}(t)
    \end{bmatrix}, 
    \]
    where $\mathcal{R_P}(t)\in \mathbb{R}^{n\times n}$, $\mathcal{R_{I\rightarrow P}}(t) \in \mathbb{R}^{n\times m}$, $\mathcal{R_{P\rightarrow I}}(t) \in \mathbb{R}^{m\times n}$ and $\mathcal{R_I}(t) \in \mathbb{R}^{m\times m}$. 
\end{definition}


From \eqref{eq:matrices}, \eqref{eq:unscaled_pop}, and \eqref{eq:unscaled_infra}, it is straightforward to conclude that $\mathcal{R}(t)=H(s(t))D_f^{-1}B_f$. Hence, we can obtain the global effective reproduction number by computing the spectral radius of the global effective reproduction matrix, i.e., $R(t)=\rho(\mathcal{R}(t))$. We leverage the connection between the DRNs and the structure of the global effective reproduction matrix, to predict the network-level behavior based on node-level assumptions. To that end, we present the following theorem.

\begin{theorem}\label{theo:node_to_network}
    Assume that Assumption~\ref{assump:parameters} holds and the system has not reached a healthy equilibrium, i.e., $z(t)\neq\mathbf{0}$ for a given time $t\geq0$. The following statements hold:
    \begin{enumerate}[label=\roman*)]
        \item If $R_i(t)>1$ for all $i\in\mathcal{V}$, then $\rho(\mathcal{R}(t))>1$, \label{theo:all_great_1}
        \item If $R_i(t)=1$ for all $i\in\mathcal{V}$, then $\rho(\mathcal{R}(t))=1$, \label{theo:all_1}
        \item If $R_i(t)<1$ for all $i\in\mathcal{V}$, then $\rho(\mathcal{R}(t))<1$. \label{theo:all_less_1}
    \end{enumerate}
\end{theorem}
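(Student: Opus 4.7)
My plan is to reduce all three claims to a single entrywise identity relating $\mathcal{R}(t) z(t)$ to the LERNs, after which the conclusions follow from standard Perron--Frobenius bounds. Using the block structure of $\mathcal{R}(t)$ in Definition~\ref{def:Rt_matrix} together with the unscaled DRN expressions in \eqref{eq:unscaled_pop} and \eqref{eq:unscaled_infra}, a direct computation for each population node $i \in \mathcal{V^P}$ gives
\begin{equation*}
\bigl(\mathcal{R}(t) z(t)\bigr)_i = \frac{s_i(t)}{\gamma_i}\!\left(\sum_{j \in \mathcal{V^P}} \beta_{ij} x_j(t) + \sum_{j \in \mathcal{V^I}} \beta_{ij}^w w_j(t)\right) = R_i(t)\, x_i(t),
\end{equation*}
where the last equality combines Definitions~\ref{def:DRN_population} and \ref{def:local_ERN}. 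An analogous computation for each resource node $j \in \mathcal{V^I}$, using Definitions~\ref{def:DRN_infrastructure} and \ref{def:local_ERN}, yields $\bigl(\mathcal{R}(t) z(t)\bigr)_j = R_j(t)\, w_j(t)$. Thus, in compact form, $\mathcal{R}(t) z(t) = \mathrm{diag}\bigl(R_1(t), \ldots, R_{n+m}(t)\bigr) z(t)$.

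With this identity in hand, each of the three statements follows quickly. By Assumption~\ref{assump:parameters}, $\mathcal{R}(t) = H(s(t)) D_f^{-1} B_f$ is entrywise nonnegative, and the hypothesis that every $R_i(t)$ is well-defined forces $x_i(t) > 0$ for all $i \in \mathcal{V^P}$ and $w_j(t) > 0$ for all $j \in \mathcal{V^I}$, i.e., $z(t) \gg \mathbf{0}$. For statement \ref{theo:all_great_1}, the assumption $R_i(t) > 1$ for every $i$ combined with the identity gives $\mathcal{R}(t) z(t) \gg z(t)$; the Collatz--Wielandt lower bound $\rho(M) \geq \min_i (Mz)_i/z_i$, valid for any nonnegative $M$ and $z \gg \mathbf{0}$, then yields $\rho(\mathcal{R}(t)) \geq \min_i R_i(t) > 1$. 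Statement \ref{theo:all_less_1} is symmetric, using the corresponding upper bound $\rho(M) \leq \max_i (Mz)_i/z_i$ to conclude $\rho(\mathcal{R}(t)) \leq \max_i R_i(t) < 1$. For statement \ref{theo:all_1}, the identity collapses to $\mathcal{R}(t) z(t) = z(t)$, so $z(t)$ is a strictly positive eigenvector of the nonnegative matrix $\mathcal{R}(t)$ with eigenvalue $1$; by the Perron--Frobenius correspondence between positive eigenvectors of a nonnegative matrix and its spectral radius, $\rho(\mathcal{R}(t)) = 1$.

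The main obstacle is really just careful bookkeeping for the key identity: verifying that the $s_i(t)/\gamma_i$ scaling in the population block and the $1/\bigl(\gamma_j^w - \mathrm{diag}(A_w)_{jj}\bigr)$ scaling in the infrastructure block combine correctly with the $x_i(t)^{-1}$ and $w_j(t)^{-1}$ normalizations baked into $R_{ij}(t)$ and $R_{jk}(t)$. Once that algebra is checked, everything reduces to textbook spectral bounds; notably, irreducibility of $\mathcal{R}(t)$ is not required for any of the three conclusions---only the entrywise nonnegativity of $\mathcal{R}(t)$ together with the strict positivity of $z(t)$ guaranteed by well-posedness of the LERNs.
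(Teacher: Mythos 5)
Your proof is correct, and it rests on the same key identity as the paper's: your relation $\mathcal{R}(t)z(t)=\mathrm{diag}\bigl(R_1(t),\dots,R_{n+m}(t)\bigr)z(t)$ is just a rewriting of the paper's equation $[R_1(t)\,\cdots\,R_{n+m}(t)]^\top=\bigl[\mathrm{diag}(z(t))^{-1}\mathcal{R}(t)\,\mathrm{diag}(z(t))\bigr]\mathbf{1}$. Where you diverge is in how the spectral conclusions are extracted. For statement \ref{theo:all_1} the two arguments are essentially equivalent (a row-stochastic similarity transform versus a strictly positive eigenvector with eigenvalue one). For statements \ref{theo:all_great_1} and \ref{theo:all_less_1}, however, the paper argues by contradiction: it perturbs entries of $\mathcal{R}(t)$ to manufacture a row-stochastic comparison matrix and then invokes the strict monotonicity of the spectral radius of an \emph{irreducible} nonnegative matrix under entrywise increase (citing Varga). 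You instead apply the Collatz--Wielandt bounds $\min_i (Mz)_i/z_i\le\rho(M)\le\max_i (Mz)_i/z_i$ directly to the positive vector $z(t)$, which gives the strict inequalities in one line with no contradiction and, as you correctly note, no irreducibility hypothesis on $\mathcal{R}(t)$. Your route is shorter, avoids the slightly delicate ``decrease some entries'' construction, and yields the quantitative refinement $\min_i R_i(t)\le\rho(\mathcal{R}(t))\le\max_i R_i(t)$, which is strictly more information than the theorem asks for; the paper's route buys nothing extra here beyond consistency with the argument it cites from prior work. The only soft spot, shared with the paper, is the passage from ``$z(t)\neq\mathbf{0}$'' to ``$z(t)\gg\mathbf{0}$''; you justify it via well-definedness of the LERNs, which is as defensible as the paper's unargued assertion, so it is not a gap relative to the original.
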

\begin{proof}
    Note that since the system has not reached a healthy equilibrium, $s_i(t),x_i(t), w_j(t) > 0$ for all $i\in\mathcal{V^P}, j\in \mathcal{V^I}$.
    Also, note that the distributed reproduction number of a location $i\in\mathcal{V^P}$ associated with a location $j\in\mathcal{V^P}$ can be written in terms of its unscaled DRN as 
    \[R_{ij}(t)=\frac{1}{x_i(t)}\Tilde{R}_{ij}(t)x_j(t),\]
    and in the case $j\in\mathcal{V^I}$, we have that 
    \[R_{ij}(t)=\frac{1}{x_i(t)}\Tilde{R}_{ij}(t)w_j(t) .\]
    In the same fashion, we can reconstruct the DRNs of any resource node $j\in\mathcal{V^I}$ based on its unscaled DRNs. Thus, in vector form we obtain
    \begin{align} \label{eq:Ri_reconstruct}
        \begin{bmatrix}
        R_1(t) & \cdots & R_n(t) & R_{n+1}(t) & \cdots & R_{n+m}(t)
    \end{bmatrix}^\top& \nonumber\\
    & \hspace{-35ex} = \big[\text{diag}(z(t))^{-1}\mathcal{R}(t)\text{diag}(z(t))\big]\mathbf{1},
    \end{align}
    where the first $n$ coordinates are associated with the $n$ groups of individuals in the population network and the last $m$ coordinates with the resource nodes in the infrastructure network. 
    
    For statement \ref{theo:all_1}, by \eqref{eq:Ri_reconstruct}, we have that the matrix $\text{diag}(z(t))^{-1}\mathcal{R}(t)\text{diag}(z(t))$ is a row stochastic matrix. Based on the fact that the spectral radius of a row stochastic matrix is 1, we have $\rho\big(\text{diag}(z(t))^{-1}\mathcal{R}(t)\text{diag}(z(t))\big)=1$.
    Since similarity transformations preserve the characteristic equation and the eigenvalues, we can conclude that $\rho\big(\text{diag}(z(t))^{-1}\mathcal{R}(t)\text{diag}(z(t))\big)=\rho(\mathcal{R}(t))$. Therefore, if $R_i(t)=1$ for all $i\in\mathcal{V}$, then we can guarantee $\rho(\mathcal{R}(t))=1$.

    For statement \ref{theo:all_great_1}, since $R_i(t)>1$ for all $i\in\mathcal{V}$, we have that $R_i(t)=\sum_{j\in\mathcal{V}}R_{ij}(t)>1$ by definition~\ref{def:local_ERN}. Assume by way of contradiction that $\rho(\mathcal{R}(t))=$ $\rho\big(\text{diag}(z(t))^{-1}\mathcal{R}(t)\text{diag}(z(t))\big)<1$ (we do not include the equality case, i.e, $\rho(\mathcal{R}(t))=1$, since it falls in the first part of this proof). By decreasing some of the entries of $\mathcal{R}(t)$, we construct a new matrix $\Tilde{\mathcal{R}}(t)$, such that $\rho\big(\text{diag}(z(t))^{-1}\Tilde{\mathcal{R}}(t)\text{diag}(z(t))\big)=1$, i.e., a row stochastic matrix. Given that $\mathcal{R}(t)$ is a nonnegative and irreducible matrix, by Assumption~\ref{assump:parameters} and since $s_i(t) > 0$ for all $i\in\mathcal{V^P}$, the spectral radius of an irreducible matrix will increase if one of its entries increases~\cite[Theorem 2.7]{varga}. Thus, we have that 
    
    \vspace{-2ex}
    
    \small
    \begin{align*}
        \rho\big(\text{diag}(x(t))^{-1}\mathcal{R}(t)\text{diag}(x(t))&>\rho\big(\text{diag}(x(t))^{-1}\Tilde{\mathcal{R}}(t)\text{diag}(x(t))\big) \\
        &=1
    \end{align*}
    
    \normalsize
    
    \noindent
    which contradicts the assumption that $\rho(\mathcal{R}(t))\leq 1$. Therefore, we must have that $\rho\big(\text{diag}(x(t))^{-1}\mathcal{R}(t)\text{diag}(x(t))>1$. Statement \ref{theo:all_less_1} follows an analogous argument.
    \hfill 
\end{proof}    
    This proof was inspired by the proof of \cite[Theorem $2$]{baike}.

\section{Simulations}\label{sec:simulations}

For the simulations, we consider a network of $10$ population nodes and $5$ resource nodes. The parameters are picked uniformly at random from the following intervals. For all $i \in \mathcal{V^P}$, $\gamma_i\in[1,3]$. For all $j\in\mathcal{V^I}$, $\gamma_j^w\in[0.6,0.75]$. The entries of the matrix $B$ are selected from the interval $[0.01,0.338]$, the entries of the matrix $B_w$ are selected from $[0.01,0.2]$, and $C_w$ is computed as $C_w = B_w^\top - 0.01\mathbf{1}_{m\times n}$. Finally, for all $j,k\in\mathcal{V^I},~\alpha_{jk}\in[0,2]$. We assume $s(0)=0.95\mathbf{1}_n$ and $r(0)=\mathbf{0}_n$. The initial contamination level for the resource nodes $w_j(0),~j \in\mathcal{V^I}$ is picked uniformly at random from the interval $[0,1]$. All plots use the same set of parameters.

\begin{figure}
    \centering
    \begin{subfigure}{0.5\textwidth}
        \centering
        \includegraphics[width=0.75\linewidth]{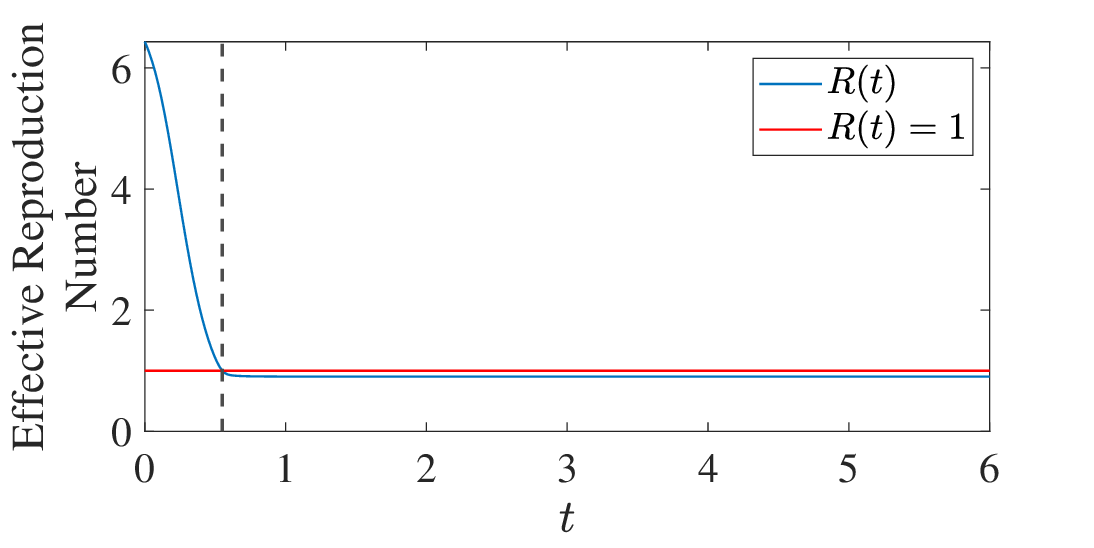}
    \end{subfigure}
    \begin{subfigure}{0.5\textwidth}
        \centering
        \includegraphics[width=0.75\linewidth]{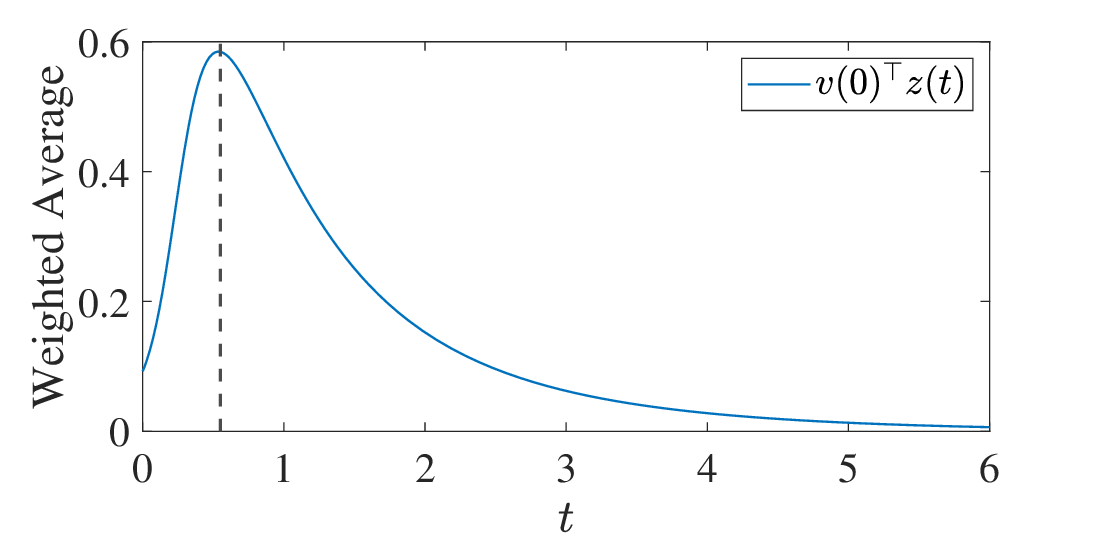}
    \end{subfigure}
    \caption{Evolution of the global effective reproduction number $R(t)$ (top) and the weighted average $v(0)^\top z(t)$ (bottom).}
    \label{fig:network-level}    
\end{figure}

Consistent with the claims in Theorem~\ref{theo:global_effective}, we see in Fig.~\ref{fig:network-level} that the global effective reproduction number monotonically decreases with time, and the peak infection time of the weighted average $v(0)^\top z(t)$ coincides with $R(t)=1$ (indicated by the vertical dashed lines in Fig.~\ref{fig:network-level}). Moreover, $v(0)^\top z(t)$ increases when $R(t)>1$ and decreases when $R(t)<1$. By inspecting the weighted average plot, we can conclude that the outbreak dies out in the population and infrastructure networks, i.e., $x(t)=\mathbf{0}$ and $w(t)=\mathbf{0}$ as $t\rightarrow \infty$, consistent with Proposition~\ref{prop:equilibria}.

Using the same parameters as in Fig.~\ref{fig:network-level}, we illustrate the evolution of the infected proportion $x_i(t)$ and the LERNs $R_i(t)$ for $i\in\{2,3,6,8,9\}$ of the population network; see Fig.~\ref{fig:node-level-pop}. Consistent with Theorem~\ref{theo:Rt_local}, for a given $i\in\mathcal{V^P}$, the infected proportion increases when $R_i(t)>1$, and decreases otherwise. Note that the dashed lines 
in Fig.~\ref{fig:node-level-pop}, corresponding to $i=\{2,3\}$ in the population network, indicate that node $i$'s peak infection time occurs when $R_i(t)=1$. Therefore, we conjecture that, for node $i\in\mathcal{V^P}$ in the population network,  $R_i(t)=1$ is a necessary and sufficient condition for identifying the peak infection time on the node level.

\begin{figure}
    \centering
    \begin{subfigure}{0.5\textwidth}
        \centering
        \includegraphics[width=0.75\linewidth]{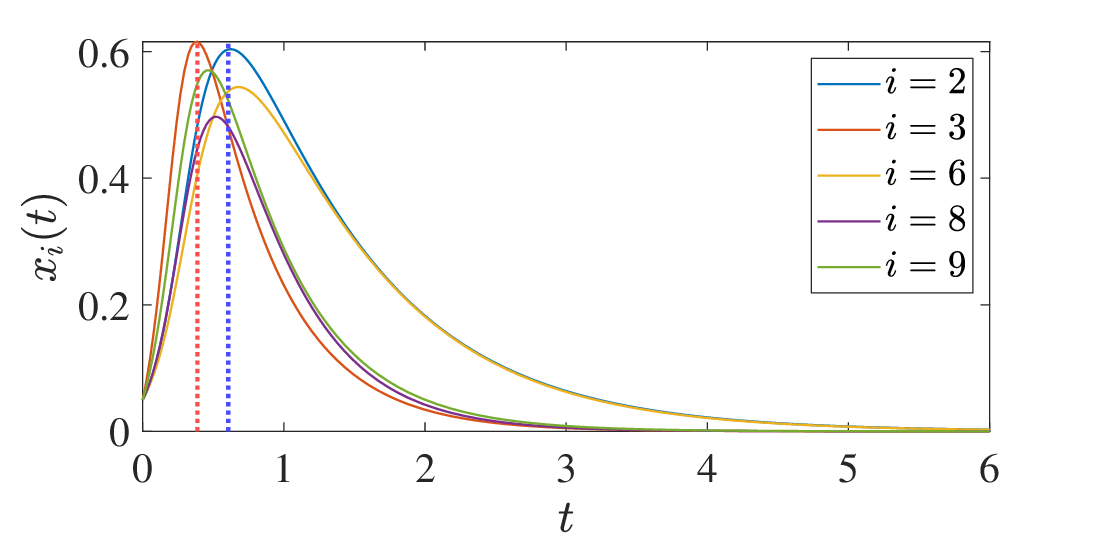}
    \end{subfigure}
    \begin{subfigure}{0.5\textwidth}
        \centering
        \includegraphics[width=0.75\linewidth]{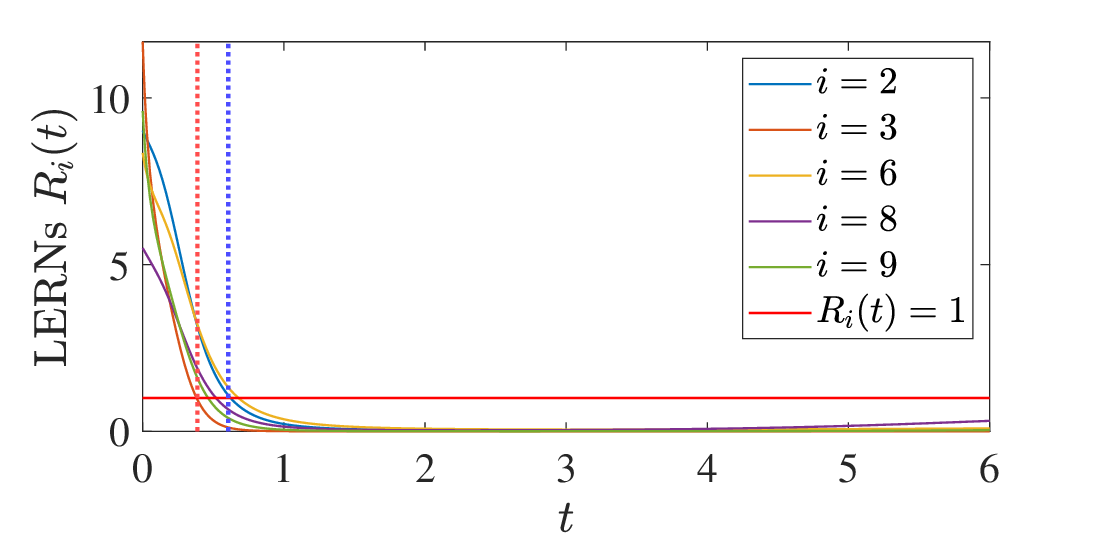}
    \end{subfigure}
    \caption{Evolution of $x_i(t)$ (top) and $R_i(t)$ (bottom) for $i\in\{2,3,6,8,9\}
    $ of the population network. Note that $R_i(t)$ can be non-monotonic and only crosses one once. Moreover, the peak infection time $\tau_{p_i}$ satisfies $R_i(\tau_{p_i})=1$ for all $i\in\mathcal{V^P}$.}
    \label{fig:node-level-pop}  
\end{figure}

In Fig.~\ref{fig:node-level-infra}, we show the evolution of the contamination of the resources nodes and the LERNs $R_j(t)$ for $j\in\{2,3,4\}$ of the infrastructure network. Consistent with Theorem~\ref{theo:Rt_local}, the contamination level $w_j(t)$ increases when $R_j(t)>1$ and decreases otherwise. However, note that for resource node~$2$, $R_2(t)$ crosses one multiple times. Thus, $R_j(t)=1$ does not necessarily coincide with the node-level peak infection time in the infrastructure network. 

\begin{figure}[h!]
    \centering
    \begin{subfigure}{0.5\textwidth}
        \centering
        \includegraphics[width=0.75\linewidth]{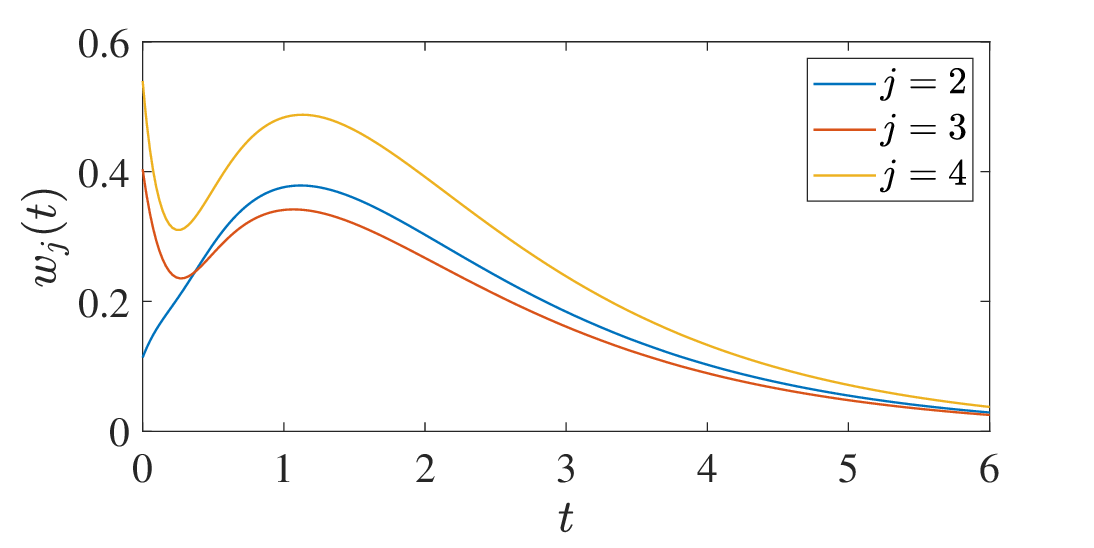}
    \end{subfigure}
    \begin{subfigure}{0.5\textwidth}
        \centering
        \includegraphics[width=0.75\linewidth]{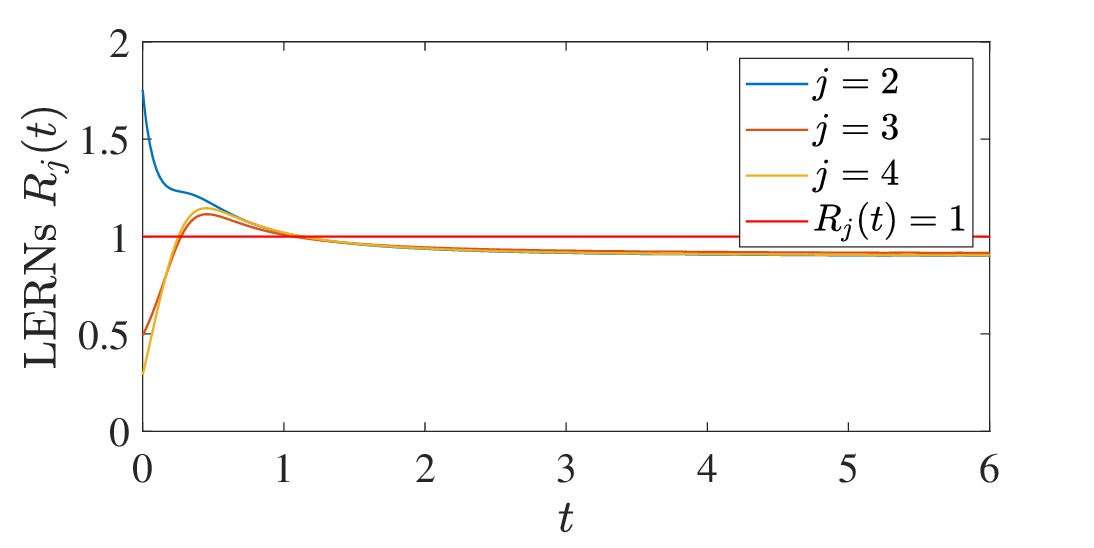}
    \end{subfigure}
    \caption{Evolution of the contamination level $w_j(t)$ (top) and the LERNs $R_j(t)$ (bottom) for $j\in\{2,3,4\}$ in the infrastructure network. All the claims in Theorem~\ref{theo:Rt_local} hold. However, $R_j(t)$ can cross one more than once.}
    \label{fig:node-level-infra} 
\end{figure}

The main takeaway from inspecting Figs.~\ref{fig:network-level},~\ref{fig:node-level-pop}, and \ref{fig:node-level-infra} is that the global effective reproduction number $R(t)$ does not account for the local spreading behavior. Furthermore, the more interesting node-level behavior of the infrastructure network, illustrated in Fig.~\ref{fig:node-level-infra}, is lost in the network-level analysis. Therefore, it is critical to leverage the DRNs and LERNs in order 
to predict and control the local behavior. 

\begin{figure}
        \centering
        \includegraphics[width=\linewidth]{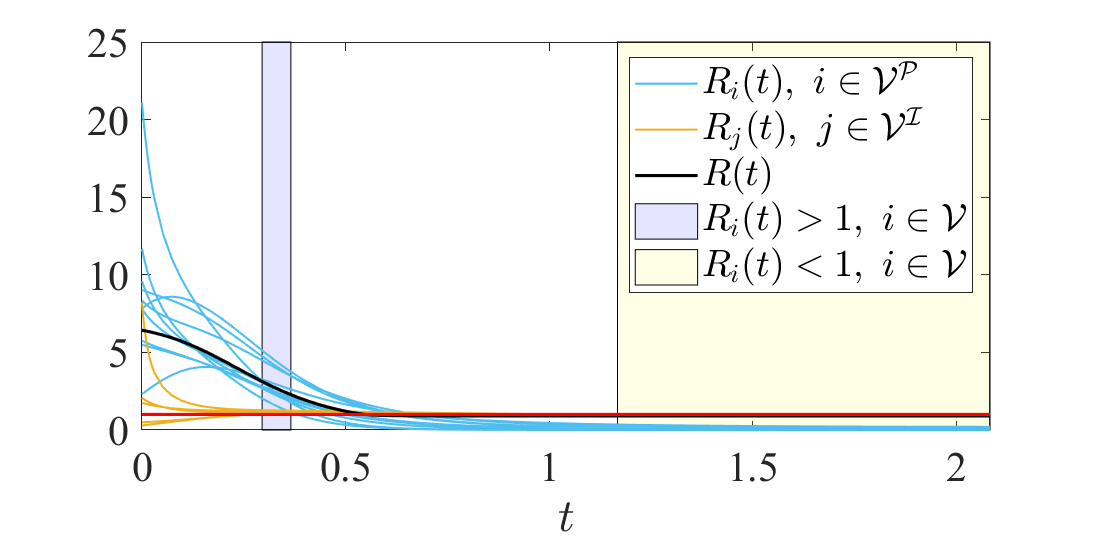}
    \caption{Evolution of the LERNs for all $i\in \mathcal{V}$. Theorem~\ref{theo:node_to_network}~\ref{theo:all_great_1} is depicted in the blue region. Theorem~\ref{theo:node_to_network}~\ref{theo:all_less_1} is depicted in the yellow region. 
    }
    \label{fig:thm3}  
\end{figure}

In Fig.~\ref{fig:thm3}, we illustrate the results in Theorem~\ref{theo:node_to_network}: the blue shaded area depicts where Theorem~\ref{theo:node_to_network}~\ref{theo:all_great_1} holds and the yellow shaded area depicts where Theorem~\ref{theo:node_to_network}~\ref{theo:all_less_1} holds. The main insight from inspecting Fig.~\ref{fig:thm3} is that leveraging the LERNs to predict the global behavior provides different (arguably better) insights than $R(t)$,
given that the $R(t)$ crosses one much earlier than when all the LERNs are below one, i.e., the infection is still increasing in more than half of the nodes.

\section{Conclusions}\label{sec:conclusions}

In this work, we introduce a novel SIR model that couples the dynamics of a virus spreading in a population network with the dynamics of contamination in an infrastructure network. We analyze the network-wide behavior of the system by introducing and leveraging the global effective reproduction number.
We also introduce the distributed and local effective reproduction numbers for both the population and infrastructure nodes in the multilayer networked model. We provide sufficient conditions to predict the monotonicity of the epidemic spreading at the node level using the distributed and local reproduction numbers. 
We explain how the node-level reproduction numbers can be used to analyze the transient behavior of the overall networked spreading process.  We illustrate our analytical results via simulations.

For future work, the connection between the node-level peak infection time and the local effective reproduction number equaling one should be shown analytically. Leveraging the node-level reproduction numbers and their connection to the global behavior for distributed control of the system is an interesting future direction. Finally, in this work, we assume that each node has knowledge of its state, its neighbors' states, and all of its local parameters; learning these parameters and estimating these states are vital steps in order to be able to leverage these tools in application.

\bibliographystyle{IEEEtran}
\bibliography{ref}

\end{document}